\def\mk {\mathfrak}
\def\pl  {\delta}
\def\ZZ {{\mathbb Z}}
\def\RR {{\mathbb R}}
\def\D {\overline D}
\def\X {\overline X}
\def\A {\overline A}
\def\B {\overline B}
\def\obeta {\overline\beta}
\def\ker {{\rm ker}}
\def\im {{\rm im}}
\def\coker {{\rm coker}}
\def\coim {{\rm coim}}
\newtheorem{theorem}{Theorem}[section]
\newtheorem{lemma}[theorem]{Lemma}
\newtheorem{conj}[theorem]{Conjecture}
\newtheorem{coro}[theorem]{Corollary}
\newtheorem{rem}{Remark}[section]
\begin{document}

\title{
Homology groups for particles on one-connected graphs
}
\author{Tomasz Maci\c{a}\.{z}ek}
\affiliation{Center for Theoretical Physics, Polish Academy of
Sciences, Al. Lotnik\'ow 32/46, 02-668 Warszawa, Poland}
\author{Adam Sawicki}
\affiliation{Center for Theoretical Physics, Polish Academy of
Sciences, Al. Lotnik\'ow 32/46, 02-668 Warszawa, Poland}
\affiliation{School of Mathematics, University of Bristol, University Walk, Bristol BS8 1TW, UK}
\date{\today}

\begin{abstract}
We present a mathematical framework for describing the topology of configuration spaces for particles on one-connected graphs. In particular, we compute the homology groups over integers for different classes of one-connected graphs. Our approach is based on some fundamental combinatorial properties of the configuration spaces, Mayer-Vietoris sequences for different parts of configuration spaces and some limited use of discrete Morse theory. As one of the results, we derive a closed-form formulae for ranks of the homology groups for indistinguishable particles on tree graphs. We also give a detailed discussion of the second homology group of the configuration space of both distinguishable and indistinguishable particles. Our motivation is the search for new kinds of quantum statistics.
\end{abstract}

\maketitle

\section{Introduction}
The importance of the fundamental group of the configuration space $C_n(X)$ of $n$ indistinguishable particles living in a topological space $X$  to the description of quantum statistics was noted over 45 years ago \cite{L-M, Wilczek,Souriau,LD}.  The configuration space is defined as an orbit space $C_n(X) = (X^{\times n}-\Delta_n)/S_n$, where $\Delta_n$ corresponds to the coincident configurations, and $S_n$ is the permutation group. As was pointed by Dowker \cite{Dowker} abelian quantum statistics are classified by the first homology group, $H_1(C_n(X),\mathbb{Z})$, which is abelianization of the fundamental group, $\pi_1(C_n(X))$. In the topological approach, quantum statistics can be viewed as a connection with the vanishing curvature, i.e. there are no classical forces associated with it. The standard examples are when $X=\mathbb{R}^3$, $H_1(C_n(X),\mathbb{Z})=\mathbb{Z}_2$ and $X=\mathbb{R}^2$, $H_1(C_n(X),\mathbb{Z})=\mathbb{Z}$. They correspond to Bose/Fermi statistics in $\RR^3$ and {\it anyons} in $\RR^2$ (see \cite{Wilczek} for physical realisations of anyons). 

The significance of higher (co)homology groups for quantum theories is connected to classification of complex vector bundles. Recall for example that $U(1)$-vector bundles (complex line bundles) are classified by the first Chern class which is an element of $H^2(C_n(\mathbb{R}^2),\ZZ)$. The higher cohomology groups of the configuration spaces for particles in $\mathbb{R}^2$ were calculated by Arnold \cite{Arnold} in the 1960's. They have three basic properties: (1) finiteness: $H^{i}(C_{n}(\mathbb{R}^{2}))$ are finite except $H^{0}(C_{n}(\mathbb{R}^{2}))=\mathbb{Z}$, $H^{1}(C_{n}(\mathbb{R}^{2}))=\mathbb{Z}$ for $n\geq2$; also $H^{i}(C_{n}(\mathbb{R}^{2}))=0$ for $i\geq n$, (2) recurrence: $H^{i}(C_{2n+1}(\mathbb{R}^{2}))=H^{i}(C_{2n}(\mathbb{R}^{2}))$, (c) stabilization $H^{i}(C_{n}(\mathbb{R}^{2}))=H^{i}(C_{2i-2}(\mathbb{R}^{2}))$ for $n\geq2i-2$. In particular by property 1), excluding $H^0$ and $H^1$, they are purely torsions.  In the table provided in \cite{Arnold} we can see that $H^2(C_n(\mathbb{R}^2))=0$.  This implies that the only $U(1)$-vector bundle in $C_n(\mathbb{R}^2)$ is the trivial one. Therefore, for scalar particles one can infer that the full topological data is contained in $H_1(C_n(X))$. Interestingly, as was discussed in \cite{Bloore} for three particles in $\mathbb{R}^3$ one has $H^2(C_n(\RR^3))=\ZZ_2$ - we have two nonisomorphic $U(1)$-bundles corresponding to Bose/Fermi statistics. The authors of \cite{Bloore} also point out that for $SU(n)$-bundles, by looking at $H_4(C_3(\RR^3))=\mathbb{Z}_3$, one can see that there are three nonisomorphic bundles that correspond to some `symmetries' of a wavefunction $\psi(x_1,x_2,x_3)$ different form the usual bosonic or fermionic ones. The connections corresponding to these bundles are unfortunately not flat except for Bosons and Fermions that belong to the same classes. We note, however, that for compact $X$ the torsion part of $H^{2k}(C_n(X),\mathbb{Z})$ corresponds always to zero-curvature connection and therefore can be interpreted as a form of quantum statistics.

Recently there has been some interest in configuration spaces for both distinguishable  and indistinguishable particles on graphs. Also some models of interacting particles on graphs have been introduced, see \cite{Bolte} and the references therein. Here, by a graph we mean one dimensional cell complex. In \cite{HKR,HKRS,ASphd} spaces $C_n(\Gamma)$ for arbitrary connected graph $\Gamma$ were studied from the abelian quantum statistics perspective and the formula for $H_1(C_n(\Gamma),\mathbb{Z})$ was found. Essentially, abelian quantum statistics on graphs depends on the connectivity of a graph and its planarity. The only possible torsion is $\mathbb{Z}_2$ and it occurs when $\Gamma$ is a nonplanar graph. Interestingly higher homology groups of $C_n(\Gamma)$ are determined by $\pi(C_n(\Gamma))$ as the spaces $C_n(\Gamma)$ are aspheriacal (the so-called Eilenberg-MacLane space of type $K(\pi_1(C_n(\Gamma)),1)$). On the other hand, form Chern theory, we know they also provide some classification of possible gauge theories on $C_n(\Gamma)$. 

Understanding the structure of  $H_k(C_n(\Gamma))$, $k>1$ seems to be a first step towards understanding the role of higher homology groups in gauge theories over graph configuration spaces, and thus understanding possible new forms of quantum statistics. There is a limited number of results in this area. Homology groups of $C_n(\Gamma)$, where $\Gamma$ is a tree ($\Gamma=T$) have been studied from the Morse-theoretic point of view by Farley and Sabalka \cite{FStree}. The authors show that $H_k(D_n(T))$ are free with rank equal to the number of $k$-dimensional critical cells of the discrete vector field. However, as they point out, it is a difficult task to give a simple formula for the number of critical cells.

The key role in computing the homology groups is played by the $k$-dimensional cycles ($k$-cycles). A $k$-cycle is a subcomplex of the considered configuration space, which is a closed surface of dimension $k$. The elements of the homology group of order $k$ are represented by $k$-cycles, where the cycles that differ by a boundary of $(k+1)$-dimensional cells, are identified. In this paper, we construct an over-complete basis of $k$-cycles for indistinguishable particles on tree graphs, using the knowledge of the critical cells of the discrete Morse vector field \cite{FSbraid}. This approach allows us to find the closed-form formulae for the ranks of homology groups (section \ref{sec:tree}). The formulae involve only the first Betti numbers of configuration spaces for particles on star subgraphs of the tree, which are known \cite{HKRS,Ghirst}. The final result is given in equation (\ref{eq:hm_tree}). The main advantage of the mathematical framework we present in this paper is that it uses elementary features of the configuration spaces, i.e. explores the scheme of connections between different elements of the configuration space, which have a simple structure. In particular, we view a one-connected graph as a wedge of possibly higher-connected components (see Fig.\ref{components}). Next, we distinguish subspaces of the configuration space, that describe different distributions of particles between the components. Such subspaces have a structure of the cartesian product of configuration spaces of the components. Hence, by K\"{u}nneth theorem, the homology groups of the subspaces can be expressed by the homology groups of configuration spaces of the components. The most difficult task is to handle the relations between the homology groups that stem from the connections between the subspaces. The tool we use to describe these is Mayer-Vietoris sequences, a standard tool in the homology theory \cite{Hatcher}. For a more transparent presentation of our methodology, in section \ref{sec:configuration_spaces} we introduce the configuration space diagrams, which present the relevant parts of the configuration spaces. Other applications of our approach presented in this paper include the two-particle configuration spaces (section \ref{sec:two_particles}) and the case of two graphs connected by a single edge (section \ref{sec:one_edge}). There, we express the second Betti number of the configuration space by the first and second Betti numbers of the configuration spaces of components (formulas \ref{h2d2_indist}, \ref{h2d2_dist} and \ref{h2_2comp}). In this paper, we also partially consider the distinguished configuration spaces that play a major role in motion planning problems \cite{Farber03,FG08,Farber04,Farber05}. In particular,  in section \ref{sec:two_particles} we describe the two-particle case and in section \ref{sec:partial} we outline a possible procedure of generalising our description. Throughout the paper, we also discuss different difficulties that arise when trying to extend out methods to other classes of graphs.

\section{Preliminaries}\label{preliminaries}
\subsection{Configuration spaces as $CW$-complexes}
The configuration space for distinguishable and indistinguishable particles in a topological space $X$ is constructed in the following way. The configuration of $n$ particles in $X$ is represented by an element of $X^{\times n}$. 
We do not want two or more particles to occupy the same position, hence from the set of all configurations we subtract the diagonal $\Delta_n=\{(x_1,\dots,x_n)\in X^{\times n}:\ \exists_{i\neq j}\ x_i=x_j\}$. For indistinguishable particles we additionally identify configurations that differ by a permutation of particles, i.e. we take the quotient by the action of $S_n$, the permutation group. The definitions are as follows:
\[F_n(X):=X^{\times n}-\Delta_n,\ C_n(X):=(X^{\times n}-\Delta_n)/S_n.\]
We are particularly interested in the situation when $X=\Gamma$ is a graph. We will regard $\Gamma$ as a one-dimensional $CW$-complex. As an example, we show both configuration spaces for two particles on a $Y$-graph (Fig.\ref{y_configurations}).
 \begin{figure}[ht]
 
\includegraphics[width=0.7\textwidth]{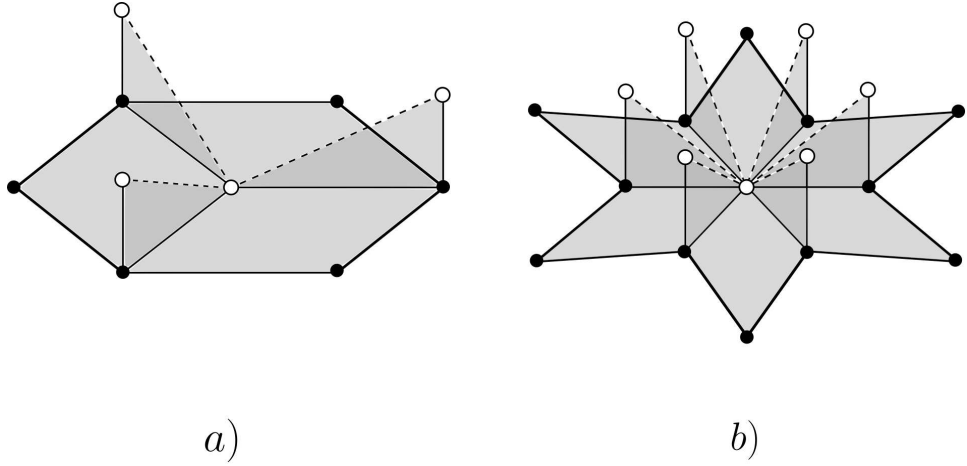}
\caption{Two-particle configuration space of a $Y$-graph embeded in $\RR^3$. a) $C_2(Y)$, b) $F_2(Y)$. The dashed lines and empty dots belong to the diagonal. For more examples of configuration spaces of graphs, see \cite{AbramsPhD,Ghirst}.}
\label{y_configurations}
\end{figure}
From figure \ref{y_configurations} we can see that spaces $F_n(\Gamma)$ and $C_n(\Gamma)$ do not have any simple structure. However, $F_n(\Gamma)$ and $C_n(\Gamma)$ can be deformation retracted to $CW$-complexes \cite{AbramsPhD} (Fig.\ref{y_complexes}). 
\begin{figure}[ht]
 
\includegraphics[width=0.7\textwidth]{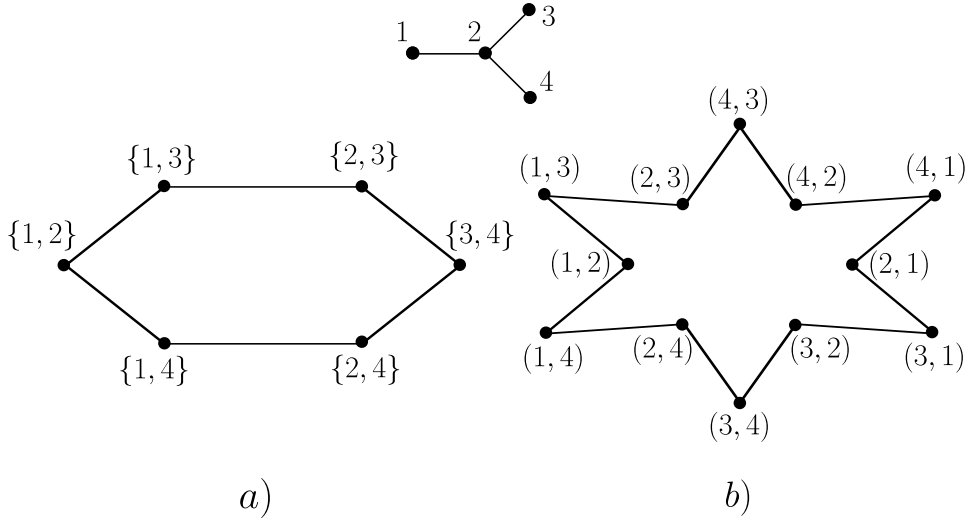}
\caption{Two-particle discrete configuration spaces of a $Y$-graph. Top picture is the $Y$-graph with ordered vertices, figures a) and b) show the discrete configuration spaces. a) $D_2(Y)$, b) $\D_2(Y)$.}
\label{y_complexes}
\end{figure}
For the deformation retraction to be valid, the graph must be {\it sufficiently subdivided} \cite{AbramsPhD}, which means that 
\begin{itemize}
\item each path between distinct vertices of degree not equal to 2 passes through at least $n-1$ edges,
\item each nontrivial loop passes through at least $n+1$ edges. 
\end{itemize}
In fact, the deformation retracts of the configuration spaces for graphs are cubical complexes. To see this, note first that $\Gamma^{\times n}$ already has the structure of a cubical complex. Namely, the $n$-dimensional cells are sequences of edges
\[\Sigma^{(n)}(\Gamma^{\times n})=\{(e_1,\dots,e_n):\ e_i\in E(\Gamma)\},\]
which are isomorphic to $n$-hypercubes. Two $n$-hypercubes in $\Gamma^n$ share a common face, i.e. an $(n-1)$-cell, when they are of the form
\[(e_1,\dots, e_{k-1}, e_k,e_{k+1},\dots,e_n),\ (e_1,\dots, e_{k-1}, e_k',e_{k+1},\dots,e_n),\ e_k{\rm\ adjacent\ to}\ e_k'.\]
 Then, the shared face is $(e_1,\dots, e_{k-1}, v,e_{k+1},\dots,e_n)$, where $v=e_k\cap e_k'$. The $n-1$-dimensional cells are sequences of $n-1$ edges and a vertex from $\Gamma$, and so on. The cubical complex, which is the deformation retract of a configuration space is called the {\it discrete configuration space}. The discrete configuration space of $F_n(\Gamma)$ will be denoted by $\D_n(\Gamma)$, while the discrete version of $C_n(\Gamma)$ by $D_n(\Gamma)$. $\D_n(\Gamma)$ or $D_n(\Gamma)$ are complexes, whose $n$-dimensional skeletons are composed of cells of the following form
\begin{eqnarray*}
\Sigma^{(n)}(\D_n(\Gamma))=\{(e_1,\dots,e_n):\ e_i\in E(\Gamma),\ e_i\cap e_j=\emptyset\ {\rm\ for\ all}\ i,j\}, \\
\Sigma^{(n)}(D_n(\Gamma))=\{\{e_1,\dots,e_n\}:\ e_i\in E(\Gamma),\ e_i\cap e_j=\emptyset\ {\rm\ for\ all}\ i,j\}.
\end{eqnarray*}
Lower dimensional cells are described by sequences or collections of edges and vertices from $\Gamma$. A $k$-dimensional cell contains $k$ edges and $n-k$ vertices. In other words, cells from $\Sigma^{(k)}(\D_n(\Gamma))$ are of the form
\begin{eqnarray*}
\sigma=(\sigma_1,\dots,\sigma_n):\ \sigma_i\cap \sigma_j=\emptyset\ {\rm\ for\ all}\ i,j,\ {\rm and}\ \#(\sigma\cap E(\Gamma))=k,\ \#(\sigma\cap V(\Gamma))=n-k.
\end{eqnarray*}
Similarly, cells of the $k$-skeleton of $D_n(\Gamma)$ are
\begin{eqnarray*}
\sigma=\{\sigma_1,\dots,\sigma_n\}:\ \sigma_i\cap \sigma_j=\emptyset\ {\rm\ for\ all}\ i,j,\ {\rm and}\ \#(\sigma\cap E(\Gamma))=k,\ \#(\sigma\cap V(\Gamma))=n-k.
\end{eqnarray*}
In particular, when there are not enough pairwise disjoint edges in the sufficiently subdivided $\Gamma$, the dimension of the discrete configuration space can be less than $n$.

Another important notion are the {\it $k$-dimensional chains} ($k$-chains) from the discrete configuration space. A $k$-chain in $\D_n(\Gamma)$ or $D_n(\Gamma)$ is a formal linear combination of $k$-cells, whose coefficients are integers.
\[\mk{C}_k=\left\{\sum_{\sigma\in\Sigma^{(k)}}a_\sigma\sigma:\ a_\sigma\in\ZZ\right\}.\]
Next, we define a boundary map that maps $k$-chains to $k-1$-chains and that satisfies $\partial\partial=0$. The boundary map is defined on $k$-cells and extends by linearity to $k$-chains. Having chosen a spanning tree of the sufficiently subdivided graph $T\in\Gamma$ and its plane embedding, we define the boundary of each edge by the following procedure of numbering the vertices \cite{KoPark,FSbraid}. The root of $T$ has number $1$ and we move along the tree to number the remaining vertices. If a vertex has degree $\geq 3$, we number the vertices in each branch in the clockwise order. Then, each edge $e\in E(\Gamma)$ has its initial and terminal vertex $\iota(e)>\tau(e)$ and the boundary of $e$ is the following $0$-chain:
\[\partial e=\iota(e)-\tau(e).\]
Similarly, for every $k$-cell we have $k$ pairs of initial and terminal faces. In the case of a cell from $D_n(\Gamma)$
\[\sigma=\{e_1,\dots,e_k,v_1,\dots,v_{n-k}\}, \]
we additionally order the edges from $\sigma$ according to their terminal vertices, i.e. $\tau(e_1)<\tau(e_2)<\dots<\tau(e_k)$. The $i$th pair of faces from the boundary of $\sigma$ reads
\begin{eqnarray*}
\left(\partial^\iota\sigma\right)_i:=\{e_1,\dots,e_{i-1},e_{i+1},\dots,e_k,v_1,\dots,v_{n-k},\iota(e_i)\}, \\
\left(\partial^\tau\sigma\right)_i:=\{e_1,\dots,e_{i-1},e_{i+1},\dots,e_k,v_1,\dots,v_{n-k},\tau(e_i)\}.
\end{eqnarray*}
The full boundary of $\sigma$ is given by the following alternating sum of faces.
\begin{equation}\label{eq:boundary}
\partial\sigma=\sum_{i=1}^k(-1)^k\left(\left(\partial^\iota\sigma\right)_i-\left(\partial^\tau\sigma\right)_i\right).
\end{equation}
An analogous formula holds for $\D_n(\Gamma)$. Then, index $i$ numbers only the edges in $\sigma$.

We are interested in the description of those chains, whose boundary is empty. Such chains of dimension $p$ are called $p$-cycles. Denote the boundary map acting on the set of $p$-chains by $\partial_p$. Having chosen a basis of $p$-cycles, we have $\ker \partial_p\simeq\ZZ^d$, where $d$ is the number of basis elements. Moreover, we want to identify those cycles, that belong to the boundary of some $p+1$-chain. Such objects are elements of the homology group of order $p$ \cite{Hatcher}.
\[H_p(C_n(\Gamma),\ZZ)=\ker\partial_p/\im\partial_{p+1}.\]
Hence, $H_p$ is a finitely generated abelian group. By the structure theorem, we know that $H_p$ has a free part (sum of copies of $\ZZ$) and a torsion-part. The quotient by ${\rm Im}\partial_{k+1}$ means that the $k$-cycles, that belong to the boundary of the same $(k+1)$-cell, are identified.

Let us next review the known results regarding the first homology group of the configuration spaces of indistinguishable particles, which will also serve as an instructive example for the above theory.  The space of $1$-cycles and the relations between the elements of the spanning set are known for any simple graph \cite{HKRS}.
\begin{theorem}[The spanning set of $\ker\partial_1$ \cite{HKRS}.]
For $D_n(\Gamma)$, where $\Gamma$ is any simple graph, an over-complete basis of $1$-cycles can be constructed from the following elements. 
\begin{enumerate}[i)]
\item One particle travelling along one cycle in $\Gamma$ and the remaining particles being fixed on the vertices, which are disjoint with the cycle.
\item Two particles exchanging on a $Y$-subgraph (fig. \ref{Y_cycle}), while the remaining particles being fixed on the vertices, which are disjoint with the $Y$-subgraph.
\end{enumerate}
\end{theorem}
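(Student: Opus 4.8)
The plan is to work directly in the cubical complex $D_n(\Gamma)$ and to prove that the two families of loops generate the quotient $H_1(D_n(\Gamma),\ZZ)=\ker\partial_1/\im\partial_2$. First I would reinterpret a $1$-cycle as a closed edge-path in the $1$-skeleton $G$ of $D_n(\Gamma)$: a $0$-cell is a placement of the $n$ particles on pairwise disjoint vertices, a $1$-cell moves a single particle across one edge of $\Gamma$ while the others stay fixed, and a general element of $\ker\partial_1$ is an integral flow on $G$, i.e. a $\ZZ$-combination of such elementary single-particle moves whose net boundary vanishes. Since $\ker\partial_1$ is exactly the cycle space of the graph $G$, it is generated by the simple closed loops of $G$; hence it suffices to show that the class in $H_1$ of each such loop is an integral combination of the classes of the type (i) and type (ii) cycles.

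The engine of the reduction is the set of $2$-cells of $D_n(\Gamma)$, which furnish the relations we are allowed to impose. A $2$-cell $\{e,f\}$ consists of two particles moving along disjoint edges $e,f$ of $\Gamma$, and its boundary is the commutator loop expressing that two such distant moves commute. Using these square relations I would push a given loop into a normal form in which, between consecutive returns to a reference configuration, only one active region of $\Gamma$ is being explored at a time, the remaining particles being frozen on vertices disjoint from that region. The residual, non-commuting contributions are then localised either (a) on a single embedded cycle of $\Gamma$, where the only motion that does not cancel against a square is one particle winding once around the cycle, which is precisely a type (i) generator, or (b) on a single essential vertex of degree $\geq 3$, i.e. a $Y$-subgraph, where the irreducible motion is the interchange of two particles, a type (ii) generator.

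The main obstacle is completeness together with integrality: showing that the normal-form reduction terminates and that nothing beyond (i) and (ii) survives. Concretely, one must control the \emph{blocking} phenomenon, in which a particle cannot traverse an edge because a second particle occupies a required vertex, since it is exactly these blocked configurations that obstruct a naive decomposition and that the $2$-cells are needed to resolve; one must also verify that exchanges at a vertex of degree $>3$ decompose into $Y$-exchanges. I would organise this bookkeeping with the Ko--Park/Farley--Sabalka vertex ordering already fixed in Section \ref{preliminaries}, which induces a discrete gradient vector field on $D_n(\Gamma)$ whose critical $1$-cells span $H_1$ \cite{KoPark,FSbraid}. The remaining work is to match each critical $1$-cell to a winding loop on a non-tree edge (type (i)) or to an exchange at an essential vertex (type (ii)), and to check that these identifications hold over $\ZZ$ rather than merely over $\mathbb{Q}$.

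As an independent check I would compare the number of generators produced in this way against the first Betti number $\beta_1(D_n(\Gamma))$, obtained from the Euler characteristic of $D_n(\Gamma)$ together with the known values for star subgraphs \cite{HKRS,Ghirst}. Agreement of the counts, combined with the explicit integral expressions built in the reduction, then confirms that the winding loops (i) and the $Y$-exchanges (ii) form an over-complete spanning set of $1$-cycles for every simple graph $\Gamma$.
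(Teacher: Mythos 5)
A point of comparison first: the paper itself gives no proof of this theorem --- it is quoted as a known result from \cite{HKRS} --- so your proposal can only be measured against the strategy of that cited source. Your route (identify $\ker\partial_1$ with the cycle space of the $1$-skeleton, impose the square relations coming from the $2$-cells, and organise the reduction via the Ko--Park/Farley--Sabalka discrete gradient field so that critical $1$-cells become the objects to be matched with the two types of generators) is essentially that strategy, so the approach is the right one. Your reading of the statement as a claim about $H_1(D_n(\Gamma))=\ker\partial_1/\im\partial_2$ rather than about $\ker\partial_1$ itself is also the only tenable one: read literally, the claim fails already for a path graph, where the boundary of a $2$-cell $\{e,f\}$ is a nonzero $1$-cycle but no cycle of type (i) or (ii) exists at all.

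Two genuine gaps remain, however. First, the actual content of the theorem --- showing that each critical $1$-cell (equivalently, each loop in your normal form) is \emph{integrally} homologous to a combination of type (i) and type (ii) cycles, which requires unblocking the blocked particles of a critical cell by sequences of $Y$-exchanges and decomposing an exchange at a vertex of degree $>3$ into $Y$-exchanges --- is precisely the step you defer as ``remaining work''; everything before it is scaffolding. The Euler-characteristic comparison you offer cannot close this gap: agreement of the number of produced generators with $\beta_1$ does not imply that they span (a family of the correct cardinality may still generate a proper subgroup), and Betti numbers are blind to the $\ZZ_2$ torsion that $H_1(D_n(\Gamma))$ acquires for non-planar $\Gamma$, so spanning over $\ZZ$ must come from carrying out the Morse-theoretic reduction integrally, not from a count. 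Second, the statement as transcribed (``any simple graph'') is false when $\Gamma$ is itself a cycle: there $H_1(D_n(\Gamma))\cong\ZZ$, generated by the cyclic rotation of all $n$ particles, while no cycle of type (i) or (ii) exists --- there is no vertex disjoint from the cycle on which to park the spectators, and there is no essential vertex. Your localisation step (``the only motion that does not cancel against a square is one particle winding once around the cycle'') silently assumes the spectators can be frozen off the active region, which is exactly what fails here; a correct proof must assume $\Gamma$ has an essential vertex, or treat cycle graphs separately, as the cited source effectively does.
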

\noindent Consider the simplest one-particle $O$-cycle, where a particle moves along a cycle, which consists of three edges. Such a cycle can be written formally as the following linear combination of $1$-cells:
\begin{equation}\label{O_cycle}
c_O=\{e_1^2\}+\{e_2^3\}-\{e_1^3\}.
\end{equation}
It is straightforward to check that $\pl_1(c_O)=0$ for the boundary map from (\ref{eq:boundary}). Another type of one-particle cycle is an exchange of particles on a $Y$-subgraph. The two-particle configuration space of this cycle is shown on Fig.\ref{y_complexes}a. 
\noindent The cycle, which we will refer to as $c_Y$-cycle is 
\begin{equation}\label{Y_cycle}
c_Y=\{e_{2}^{3},1\}+\{e_{1}^{2},3\}+\{e_{2}^{4},3\}-\{e_{2}^{3},4\}-\{e_{1}^{2},4\}-\{e_{2}^{4},1\}.
\end{equation}
\noindent For any pair $1$-chains in $\Gamma$,
\[c=\sum_{i}a_i\{e_i\},\ c'=\sum_{j}b_j\{e'_j\},\] 
one can construct their tensor product, which is given by the following formula
\[c \otimes c'=\sum_{i,j}a_i b_j\{e_i\}\otimes\{e'_j\}.\]
If the chains are disjoint, i.e. $e_i\cap e_j'=\emptyset$ for all $i,j$, the above tensor product can be embedded into $D_2(\Gamma)$ by taking $\{e_i\}\otimes\{e'_j\}\mapsto \{e_i,e_j\}$. By choosing $c$ and $c'$ to be disjoint $1$-cycles, the resulting $2$-chain is a $2$-cycle, which is topologically a cartesian product of circles, i.e. a $2$-torus. Note that by taking the tensor product of a sufficiently large number of disjoint $1$-cycles, one can construct larger dimensional tori. The part of $\ker\partial_p$, which has such a form will be referred to as the {\it toric} part. For example, the toric part of $\ker\partial_2$ in $D_2(\Gamma)$ is given by $c_O\otimes c_O'$, where $c_O$ and $c_O'$ are disjoint $1$-cycles from $\Gamma$. The toric part of $\ker\partial_2$ in $D_3(\Gamma)$ is $c_O\otimes c_O'\otimes\{v\}$ with $c_O\cap c_O'\cap v=\emptyset$, and $c_O\otimes c_Y$, $c_O\cap c_Y=\emptyset$. As we show in the next paragraphs, a key role in the computation of homology groups for particles on tree graphs is played by tori, which are products of disjoint $c_Y$-cycles.

In this paper, we will also partially use Forman's Morse theory for $CW$-complexes \cite{Forman}, which has been formulated for discrete graph configuration spaces in \cite{KoPark,FSbraid,FStree,AS12}. We do not review all the details of this construction, since it is just a background for the methods that are developed in this paper. We restrict ourselves to a brief description of the general idea standing behind the discrete Morse theory. 

The discrete Morse theory is a construction of a homotopy deformation of a cell complex, which collapses some of the cells, effectively shrinking the complex. The collapse is performed along the flow of the discrete gradient vector field. The discrete gradient vector field is a map from $k$-cells to $(k+1)$-cells that satisfies certain conditions \cite{Forman}. Cells, that are in the image of the discrete vector field are ${\it collapsible}$. Cells from the domain of the vector field are ${\it redundant}$. The cells of both these kinds are collapsed by the deformation induced by the flow of the vector field. The cells of the third kind are the cells, which are neither in the domain nor in the image of the vector field, i.e. the ${\it critical}$ cells. Such cells constitute the Morse complex. In the following, we will often use the knowledge of the critical cells to construct an over-complete basis of cycles.

\subsection{Presentation of results for tree graphs}
The main contribution of this work is the computation of the homology groups for indistinguishable particles on tree graphs (section \ref{sec:tree}). The methods that are used in this context, can be also extended to the case of two particles on a $1$-connected graph (section \ref{sec:two_particles}) and two graphs connected by a single edge (section \ref{sec:one_edge}). We will regard a tree graph as a loopless lattice of star graphs. A star graph is a graph with a single vertex of degree larger than $2$ (called the hub, or the central vertex) and a number of edges attached to the vertex. Namely, for every tree graph one can construct the underlying tree, whose vertices denote the central vertices of the star graphs and the edges symbolise the connections between the star graphs, see Fig.\ref{tree}. 
 \begin{figure}[ht]
 
\includegraphics[width=0.5\textwidth]{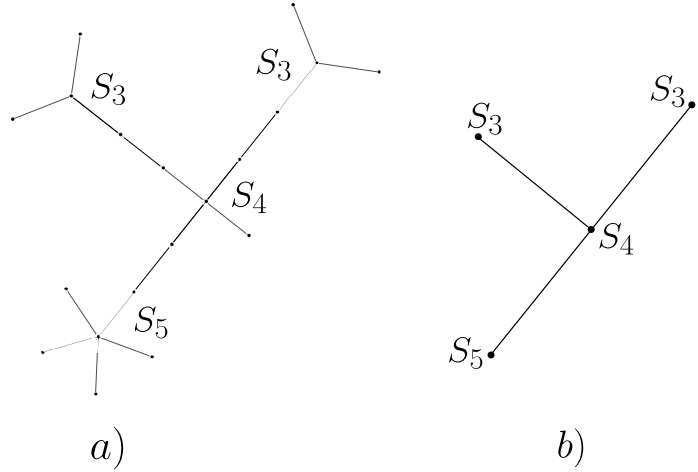}
\caption{ a) A tree graph regarded as a lattice of star graphs. b) The underlying scheme of connections.}
\label{tree}
\end{figure} 
 The homology groups for star graphs are well-known. In particular, $D_n(S)$ is homotopy equivalent to a wedge sum of circles \cite{Ghirst,AbramsPhD} \footnote{A wedge sum of topological spaces $X$ and $Y$ is a space, which is created by identifying a point in $X$ with a point in $Y$. In other words, this is the space $(X\sqcup Y)/\sim$, where $\sim$ is the quotient map that identifies the two distinguished points.}. Hence, $H_k(D_n(S))=0$ for $k\geq2$. Moreover, the dimension of $H_1(D_n(S))$ is given by \cite{HKRS}
\begin{equation}\label{h1star_ind}
\beta_1^{(n)}(S)={{n+E-2}\choose{E-1}}(E-2)-{{n+E-2}\choose{E-2}}+1,
\end{equation}
where $E$ is the number of edges adjacent to the central vertex of $S$. The same is true for $\D_n(S)$, except the number of generators of $H_1(\D_n(S))$ reads \cite{Ghirst}
\begin{equation}\label{h1star_dist}
\obeta_1^{(n)}(S)=1+(nE-2n-E+1)\frac{(n+E-2)!}{(E-1)!}.
\end{equation}
Both formulae are connected by Euler characteristics of the complexes. Namely, because each cell of $D_n(S)$ is covered $n!$ times in $\D_n(S)$, we have $\overline\chi=n!\chi$, which implies that $1-\obeta_1^{(n)}(S)=n!(1-\beta_1^{(n)}(S))$.

Homology groups of $D_n(T)$ have been studied from the Morse-theoretic point of view by Farley and Sabalka in \cite{FStree}. The authors show that $H_k(D_n(T))$ are free with rank equal to the number of $k$-dimensional critical cells of the discrete vector field. However, as they point out, it is a difficult task to give a simple formula for the number of critical cells. In this paper, we construct an over-complete basis of $\partial_k$ for $D_n(T)$ using the knowledge of the critical cells of the discrete vector field. This approach allows us to find closed-form formulae for the numbers of the critical cells. The idea is to construct a $k$-cycle for a given critical $k$-cell, which contains the critical cell and which is carried by the vector field's flow to the corresponding cell in the Morse complex. Critical cells of the discrete vector field for trees are known \cite{KoPark,FStree}. A critical cell contains $k$ disjoint edges and $n-k$ distinct vertices. Each edge from the considered critical $k$-cell contains a vertex of degree $\geq 3$. Let us call such a vertex the {\it hub} of a star graph. The vertices from the critical cell are {\it blocked}, i.e. are stacked behind the hubs or stacked behind the tree's root\footnote{For a full description of critical cells of the discrete vector field, see \cite{KoPark,FStree}.}. See Fig.\ref{critical_example} for an example of a critical $2$-cell. 
\begin{figure}[ht]
 
\includegraphics[width=0.4\textwidth]{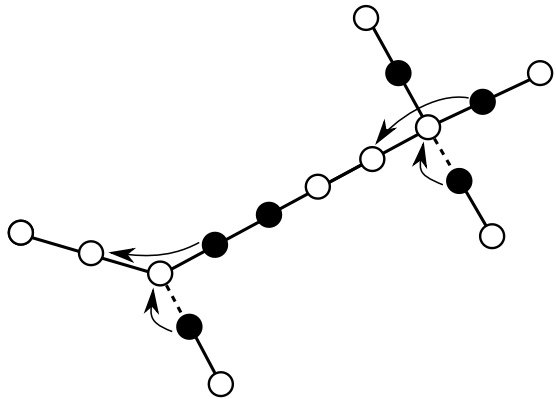}
\caption{The correspondence between the critical cells of the discrete gradient vector field and cycles in the configuration space. The edges from the critical cell are marked with dashed lines. Arrows mark the $Y$-subgraphs, where the pairs of particles exchange. The occupied vertices (black dots) denote the free particles, which are stacked behind the hubs.}
\label{critical_example} 
\end{figure}
The corresponding cycle, that is carried by the vector field's flow to a proper critical cell is of the form
\[c_{Y_1}\otimes c_{Y_2}\otimes\dots\otimes c_{Y_k}\otimes\{v_1,\dots,v_{n-2k}\},\]
where each of the $Y$-subgraphs consists of one edge from the critical cell, one edge, which contains the hub and a free vertex in the star graph, and one edge, which contains the hub and a vertex from the critical cell (Fig.\ref{critical_example}). Clearly, such a cycle contains a single critical cell, which is the desired one. Moreover, one can check that the remaining cells from such a cycle are collapsible or redundant, i.e. are collapsed by the vector field's flow.  The over-complete basis for tree graphs that we consider, consists of all such cycles, however we do not require the vertices $\{v_1,\dots,v_{n-2k}\}$ to be blocked. They can be arbitrary vertices of $T$. Hence, our over-complete basis is much larger than the number of critical cells. Section \ref{sec:tree} uses a proper topological machinery to handle the relations between the cycles from the over-complete basis. These relations come from the $(k+1)$-cells from $D_n(T)$ and from linear dependence within $\ker\partial_p$.

The fact that the homology groups are free, can be also proved using the above correspondence between the critical cells and cycles in $D_n(T)$. Namely, there are no relations between the $k$-cycles of the Morse complex that stem from the boundaries of $(k+1)$-cells. This is because every cell of the Morse complex has no boundary, as it is the image under the deformation retraction of a cycle in the configuration space. Such a deformation maps cycles to cycles.

Let us next describe the intuition standing behind the proof of the formulae for the ranks of the homology groups. Particles on a tree graph can exchange only on $Y$-subgraphs. Using the above arguments from Morse theory, it is enough to consider exchanges of pairs of particles that involve separate $Y$-subgraphs. There are two kinds of relations between the cycles corresponding to such exchanges
\begin{itemize}
\item relations between the exchanges on $Y$-subgraphs from the same star subgraph,
\item relations between the exchanges on distinct star subgraphs, that stem from the connections between the subgraphs.
\end{itemize}
The relations of the first kind can be handled by choosing the $1$-cycles, which are the representatives of the basis of the first homology group for the proper star subgraphs. The number of independent $1$-cycles for particles on a star graph is given by formula (\ref{h1star_ind}). For example, consider a tree, which consists of exactly two star graphs (Fig.\ref{tree-components}), $S$ and $S'$. The representatives of the second homology group for four particles on such a tree are the $2$-cycles, which are products of two-particle $1$-cycles from $S$ and two-particle $1$-cycles from $S'$. Hence, the rank of $H_2(D_4(S,S'))$ reads
\[\beta_2^{(4)}(S,S')=\beta_1^{(2)}(S)\beta_1^{(2)}(S').\] 
When the number of particles on $T$ is larger than $4$, the relations of the second kind come into play. There are $2$-cycles, that come from all possible distributions of particles between $S$ and $S'$. In the case of $5$ particles, the number of such cycles is $\beta_1^{(2)}(S)\beta_1^{(3)}(S')+\beta_1^{(3)}(S)\beta_1^{(2)}(S')$. However, each such $2$-cycle has one free particle, that does not take part in the exchange. Hence, the $2$-cycles, where the free particle is sitting on the path connecting $S$ and $S'$, where counted twice. To obtain the rank of $H_2(D_5(S,S'))$, we have to subtract the double-counted cycles, whose number is $\beta_1^{(2)}(S)\beta_1^{(2)}(S')$. In the case of $n$ particles, we have to subtract the cycles, where at least one particle is sitting on the connecting path, which is exactly the number of (over-complete) cycles for $n-1$ particles. The formula reads
\[\beta_2^{(n)}(S,S')=\sum_{l=2}^{n-2}\left(\beta_1^{(l)}(S)-\beta_1^{(l-1)}(S)\right)\beta_1^{(n-l)}(S').\]
A similar result holds for a situation, where star graph $S$ is connected with a single edge to a tree graph $T'$, i.e. $S'$ can be replaced in the above formula by $T'$. Recall that $\beta_1^{(n-l)}(T')$ is the sum of $\beta_1^{(n-l)}(S')$ for all $S'\subset T'$. Hence, we have $\beta_2^{(n)}(T)=\sum_{(S,S')\subset T}\beta_2^{(n)}(S,S')$ for any tree. The same reasoning can be used to compute the rank of the $k$th homology group. The conditions for $H_m$ to be nonzero are: i) the tree contains at least $m$ star subgraphs, ii) the number of particles is at least $2k$. The simplest case is when $n=2m$ and the tree contains exactly $m$ star subgraphs. Then, it is enough to multiply the two-particle $1$-cycles from the distinct star subgraphs, i.e.
\[\beta_m^{(2m)}(T)=\prod_{S\subset T}\beta_1^{(2)}(S)\ {\rm for}\ \#T=m.\]
For a larger number of particles, handling the multiply-counted cycles is a more difficult task than in the case of $H_2$, because there are more connecting paths, where the free particles can be distributed. However, this problem can be tackled recursively. Consider a star graph $S$ connected by an edge with a tree, which consists of $m-1$ star subgraphs. Every $m$-cycle in $H_m(D_n(S,T'))$, $n>2m$, is a product of a $1$-cycle from $D_l(S)$ and a  $(m-1)$-cycle from $D_{n-l}(T')$. Multiplying the $(m-1)$-cycles with the $1$-cycles and subtracting the multiply-counted cycles, we get
\[\beta_m^{(n)}(S,T')=\sum_{l=2}^{n-2}\left(\beta_1^{(l)}(S)-\beta_1^{(l-1)}(S)\right)\beta_{m-1}^{(n-l)}(T').\]
Considering tree $T'$ as a star $S'$ connected by an edge with tree $T''$, we get a similar relation for $\beta_{m-1}^{(n-l)}(T')$. Proceeding in this way, we end up with a formula, which expresses $\beta_m^{(n)}(T)$ by the first Betti numbers of the star subgraphs of $T$ for different distributions of particles. The final expression is given in equation (\ref{eq:hm_tree}) in section \ref{sec:tree}. 

The above reasoning is just a sketch of the main ideas standing behind the rigorous proof, which is given in the following sections of this paper. 

\section{Configuration space for one-connected graphs}\label{sec:configuration_spaces}
In this section, we describe the structure of $D_n(\Gamma)$ and $\D_n(\Gamma)$ for one-connected graphs. The results of this section play a key role in the method of computing homology groups by Mayer-Vietoris sequences. As a preliminary exercise, consider the case of $n$ particles on a disjoint sum of two graphs, $\Gamma=\Gamma_1\sqcup \Gamma_2$. One can distinguish different parts of $D_n(\Gamma)$, which correspond to distributing $k$ particles on $\Gamma_1$ and $l$ particles on $\Gamma_2$, $k+l=n$. Because the two graphs are disjoint, such a component is isomorphic to the Cartesian product of the corresponding configuration spaces for $\Gamma_1$ and $\Gamma_2$, i.e. $D_k(\Gamma_1)\times D_l(\Gamma_2)$. The following lemma shows, that there are no connections between different components.

 \begin{lemma}
 The $n$-particle configuration space for a disjoint pair of graphs is a disjoint union of the following connected components.
 \[D_n(\Gamma_1\sqcup\Gamma_2)=\bigsqcup_{k+l=n}D_{k}(\Gamma_1)\times D_l(\Gamma_2).\]
 \end{lemma}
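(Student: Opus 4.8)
The plan is to build an isomorphism of cubical complexes at the level of cells and then upgrade it to the claimed decomposition of the whole space by observing that the particle-number splitting is locally constant. Recall from the cell description that a cell of $D_n(\Gamma_1\sqcup\Gamma_2)$ is an unordered collection $\sigma=\{\sigma_1,\dots,\sigma_n\}$ of pairwise disjoint edges and vertices of $\Gamma_1\sqcup\Gamma_2$, with the dimension equal to the number of edges among the $\sigma_i$. The key observation is that, since $\Gamma_1$ and $\Gamma_2$ are disjoint as subcomplexes, every $\sigma_i$ lies entirely in exactly one of $\Gamma_1$ or $\Gamma_2$, and any element of $\Gamma_1$ is automatically disjoint from any element of $\Gamma_2$. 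Hence $\sigma$ splits uniquely as $\sigma=\sigma'\sqcup\sigma''$, where $\sigma'$ gathers the elements lying in $\Gamma_1$ and $\sigma''$ those lying in $\Gamma_2$; the disjointness conditions that make $\sigma'$ and $\sigma''$ genuine cells of $D_k(\Gamma_1)$ and $D_l(\Gamma_2)$ (with $k=\#\sigma'$, $l=\#\sigma''$, $k+l=n$) are precisely those inherited from $\sigma$, the cross terms being vacuous.

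First I would define the assignment $\sigma\mapsto(\sigma',\sigma'')$ and check that it is a bijection onto the cells of $\bigsqcup_{k+l=n}D_k(\Gamma_1)\times D_l(\Gamma_2)$: its inverse $(\sigma',\sigma'')\mapsto\sigma'\cup\sigma''$ is well defined precisely because of the automatic cross-disjointness noted above. Since the dimension of a cell equals its number of edges, and the edges of $\sigma'$ and $\sigma''$ are disjointly partitioned, this bijection is dimension-additive, matching the product grading in which a cell of $D_k(\Gamma_1)\times D_l(\Gamma_2)$ has dimension $\dim\sigma'+\dim\sigma''$.

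Next I would verify that the bijection respects the cubical structure, i.e. the boundary maps. The boundary operation~(\ref{eq:boundary}) replaces a single edge $e\in\sigma$ by one of its endpoints $\iota(e)$ or $\tau(e)$, leaving the remaining $n-1$ elements untouched. If $e$ lies in $\Gamma_1$, then both its endpoints lie in $\Gamma_1$ and the $\Gamma_2$-part $\sigma''$ is unchanged, so the face map acts only on the first factor; symmetrically for $e\in\Gamma_2$. This is exactly the product boundary rule $\partial(\sigma'\times\sigma'')=\partial\sigma'\times\sigma''\pm\sigma'\times\partial\sigma''$, whence the identification is an isomorphism of cubical complexes onto $\bigsqcup_{k+l=n}D_k(\Gamma_1)\times D_l(\Gamma_2)$. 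Finally, to see that this is a genuine disjoint union rather than a single connected complex, I would note that the pair $(k,l)=(\#\sigma',\#\sigma'')$ is invariant under the face relation: passing to a face replaces an edge by a vertex within the same component, changing neither $k$ nor $l$. Thus cells with distinct $(k,l)$ share no faces and lie in distinct connected components, justifying the disjoint-union notation (and, when each $D_k(\Gamma_i)$ is itself connected, identifying the listed pieces with the actual connected components).

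The splitting itself is purely combinatorial and presents no genuine obstacle, since disjointness of $\Gamma_1$ and $\Gamma_2$ trivializes every cross-component incidence condition. The only step demanding real care is the compatibility of boundary maps, where one must track the sign and ordering conventions introduced by the spanning-tree choice and the vertex numbering used in~(\ref{eq:boundary}); one should check that ordering the edges of $\sigma'\cup\sigma''$ by terminal vertex is consistent with ordering the edges of $\sigma'$ and of $\sigma''$ separately, so that the induced signs agree with the standard Koszul signs of the product complex. This is the point I expect to require the most bookkeeping, while everything else follows directly from the cell-level bijection.
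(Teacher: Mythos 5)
Your proposal is correct, and its core observation is the same one the paper's proof turns on: every edge of $\Gamma_1\sqcup\Gamma_2$ lies wholly in $\Gamma_1$ or wholly in $\Gamma_2$, so no elementary move (face relation / 1-cell) can change the particle distribution $(k,l)$. The difference is in architecture and scope. The paper takes the identification of each piece with $D_k(\Gamma_1)\times D_l(\Gamma_2)$ as immediate from the cell description and devotes the proof entirely to disconnectedness: it reduces the existence of a path between two $0$-cells of a cubical complex to the existence of a $1$-chain, and then shows that any $1$-cell containing a given $0$-cell has both endpoints in the same component, so no bridging $1$-cell exists. You instead prove the stronger, more structural statement: an explicit cell-level bijection $\sigma\mapsto(\sigma',\sigma'')$ giving an isomorphism of cubical complexes, with the observation that $(k,l)$ is invariant under passing to faces in every dimension (your dimension-one case is exactly the paper's $1$-cell argument). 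Your route buys a cleaner justification of the product identification itself and of the K\"unneth step the paper uses immediately afterwards, at the cost of the sign/ordering bookkeeping you flag; note, though, that for the lemma as stated (a topological decomposition) the Koszul-sign compatibility is not actually needed, since disjointness of the pieces follows from the unsigned face relations alone. Your parenthetical caveat that the listed pieces are the genuine connected components only when each $D_k(\Gamma_i)$ is connected is also a point the paper glosses over.
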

 \begin{proof}
Because $\Gamma_1$ and $\Gamma_2$ are disjoint, there is no possibility for the particles to move from one graph to another. Such a possibility is essential for the existence of connections between different components of $D_n(\Gamma_1\sqcup\Gamma_2)$. More formally, for any two cells
 \[c\in D_{k}(\Gamma_1)\times D_l(\Gamma_2)\ {\rm and}\ c'\in D_{k'}(\Gamma_1)\times D_{l'}(\Gamma_2),\]
where $k\neq k'$ or $l\neq l'$, there is no path in $D_n(\Gamma_1\sqcup \Gamma_2)$ joining $c$ with $c'$. To see this, recall that the $n$-particle configuration space is a cubic complex. The existence of a path joining two vertices of a cubic complex, is equivalent to the existence of a $1$-chain in the complex that joins the two vertices. Such a $1$-chain necessarily contains a $1$-cell, whose endpoints belong to $D_{k}(\Gamma_1)\times D_l(\Gamma_2)$ and $D_{k'}(\Gamma_1)\times D_{l'}(\Gamma_2)$ respectively. We will next show that such a $1$-cell does not exist. The endpoints of such a cell are the $0$-cells of the form
\begin{eqnarray*}
c^{(0)}=\{v_1^{(1)},v_2^{(1)},\dots,v_k^{(1)},v_1^{(2)},v_2^{(2)},\dots,v_l^{(2)}\}\in D_{k}(\Gamma_1)\times D_l(\Gamma_2), \\
{c'}^{(0)}=\{{v'}_1^{(1)},{v'}_2^{(1)},\dots,{v'}_{k'}^{(1)},{v'}_1^{(2)},{v'}_2^{(2)},\dots,{v'}_{l'}^{(2)}\}\in D_{k'}(\Gamma_1)\times D_{l'}(\Gamma_2),
\end{eqnarray*}
where $v_i^{(1)},{v'}_i^{(1)}\in \Gamma_1,\ v_i^{(2)},{v'}_i^{(2)}\in \Gamma_2$. For ${c'}^{(0)}$ and ${c}^{(0)}$ to be the endpoints of a $1$-cell, there must exist a pair of vertices $(v_i^{(1)},{v'}_j^{(1)})$ or $(v_i^{(2)},{v'}_j^{(2)})$, who are adjacent in $\Gamma_1$ or $\Gamma_2$ respectively. Without loss of generality, we can assume that $(v_1^{(1)},{v'}_1^{(1)})$ is such a pair. Then, any $1$-cell that contains $c^{(0)}$ is of the form
\[c^{(1)}=\{e,v_2^{(1)},\dots,v_k^{(1)},v_1^{(2)},v_2^{(2)},\dots,v_l^{(2)}\},\]
where $\partial_1(e)=\pm(v_1^{(1)}-{v'}_1^{(1)})$. Therefore, both endpoints of any $1$-cell containing $c^{(0)}$ belong to $D_{k}(\Gamma_1)\times D_l(\Gamma_2)$, which is a contradiction.
 \end{proof}
An analogous result holds for $\D_n(\Gamma_1\sqcup\Gamma_2)$, where the components $\D_{k}(\Gamma_1)\times \D_l(\Gamma_2)$ split into $n\choose k$ parts with different permutations of particles. 
Because the configuration is a disjoint sum, we have
\[H_m(D_n(\Gamma_1\sqcup\Gamma_2))=\bigoplus_{k+l=n}H_m(D_{k}(\Gamma_1)\times D_l(\Gamma_2)).\]
Furthermore, by K\"unneth theorem
\[H_m(D_{k}(\Gamma_1)\times D_l(\Gamma_2))=\bigoplus_{i+j=m}H_i(D_{k}(\Gamma_1))\otimes H_j(D_{l}(\Gamma_2)).\]
Hence, for a disjoint sum of graphs, the knowledge of the homology groups of the configuration spaces of the components is sufficient to compute the homology groups of $D_n(\Gamma)$. Let us next move to the case of one-connected graphs\footnote{From now on, $\Gamma$ will always denote a one-connected graph.}. A one-connected graph is a graph, which becomes disconnected after removing a particular vertex together with the adjacent edges. Note that $D_n(\Gamma)$, where $\Gamma$ is a one-connected graph, can be treated in a similar way by considering components that arise from splitting the graph at a proper vertex (see Fig.\ref{splitting}). 
 \begin{figure}[ht]
 
\includegraphics[width=0.5\textwidth]{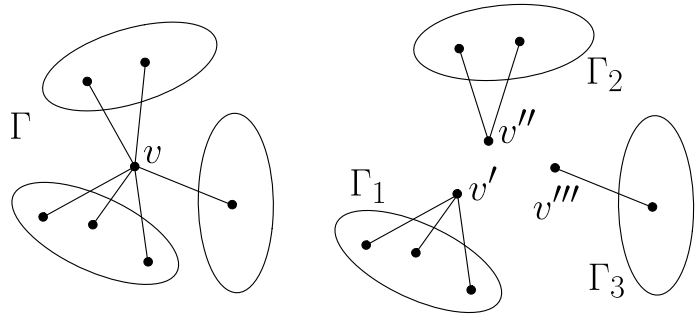}
\caption{A one-connected graph $\Gamma$ as a wedge sum of three components. $\Gamma=(\sqcup_i\Gamma_i)/\sim$, where $v'\sim v'',\ v''\sim v'''$.}
\label{splitting}
\end{figure} 
In other words, any one-connected graph can be viewed as a wedge sum of graphs, which we call components. Consider first a simpler case, where $\Gamma$ has two components. Our goal is to describe the connections in $D_n(\Gamma)$ between the components $D_{k}(\Gamma_1)\times D_l(\Gamma_2)$ that are induced by the gluing map. In fact, we have to consider the disjoint subgraphs of $\Gamma$, hence sometimes we have to remove vertex $v$ from each component. To this end, we do an extra subdivision of edges that connect $\Gamma_i$ with $v$ and remove the last segment of each such edge. The component after such an operation will be denoted by $\tilde\Gamma_i$ (see Fig.\ref{components}).
 \begin{figure}[ht]
 
\includegraphics[width=0.7\textwidth]{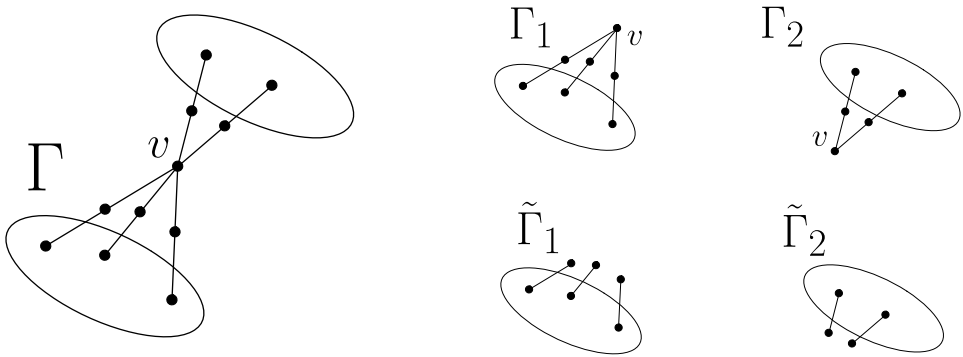}
\caption{The components of a $1$-connected graph with two components.}
\label{components}
\end{figure} 
Let us next show how the different components of $\Gamma$ come into play in $D_2(\Gamma)$. Cells of $D_2(\Gamma)$ are
\begin{eqnarray*}
\Sigma^{(0)}(D_2(\Gamma))=\{\{v,v'\}:\ v\neq v'\},\ \Sigma^{(1)}(D_2(\Gamma))=\{\{e,v\}:\ e\cap v=\emptyset\},\\ \Sigma^{(2)}(D_2(\Gamma))=\{\{e,e'\}:\ e\cap e'=\emptyset\}.
\end{eqnarray*}
Next, we write each set of cells as a sum of cells from different components.
\begin{eqnarray*}
\Sigma^{(0)}(D_2(\Gamma))=\{\{v,v'\}:\ v\neq v'\ {\rm and}\ v,v'\in V(\Gamma_1)\}\cup \{\{v,v'\}:\ v\neq v'\ {\rm and}\ v,v'\in V(\Gamma_2)\}\cup \\
\cup \{\{v,v'\}:\ v\in V(\Gamma_1)\ {\rm and}\  v'\in V(\tilde\Gamma_2)\}\cup \{\{v,v'\}:\ v\in V(\tilde\Gamma_1)\ {\rm and}\  v'\in V(\Gamma_2)\}.
\end{eqnarray*}
The sets of cells of a higher dimension can be written in an analogous way. In other words, 
\[\Sigma^{(i)}(D_2(\Gamma))=\Sigma^{(i)}(D_2(\Gamma_1))\cup\Sigma^{(i)}(D_2(\Gamma_2))\cup\Sigma^{(i)}(\Gamma_1\times\tilde\Gamma_2)\cup\Sigma^{(i)}(\tilde\Gamma_1\times\Gamma_2).\]
Some of the above summands are not disjoint, i.e.
\begin{eqnarray*}
\Sigma^{(i)}(D_2(\Gamma_1))\cap\Sigma^{(i)}(\tilde\Gamma_1\times\Gamma_2)=\Sigma^{(i)}(\tilde\Gamma_1\times v),\ \Sigma^{(i)}(\tilde\Gamma_1\times\Gamma_2)\cap\Sigma^{(i)}(\Gamma_1\times\tilde\Gamma_2)=\Sigma^{(i)}(\tilde\Gamma_1\times\tilde\Gamma_2), \\ \Sigma^{(i)}(D_2(\Gamma_2))\cap\Sigma^{(i)}(\Gamma_1\times\tilde\Gamma_2)=\Sigma^{(i)}(\tilde\Gamma_2\times v).
\end{eqnarray*}
The structure of $D_2(\Gamma)$ is shown on Fig.\ref{D2G}.
 \begin{figure}[ht]
\includegraphics[width=0.6\textwidth]{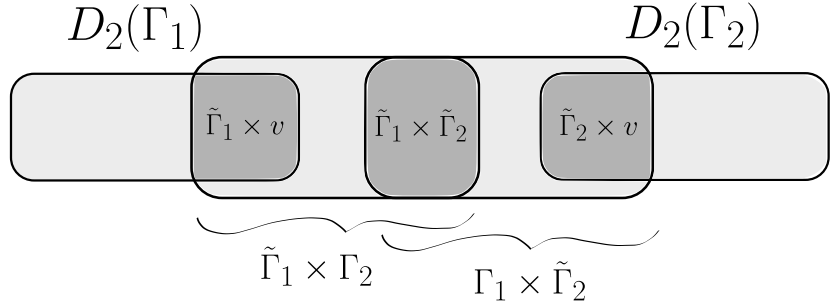}
\caption{A scheme of $D_2(\Gamma)$ for $\Gamma$ with two components.}
\label{D2G}
\end{figure} 
Configuration space from Fig.\ref{D2G} can be represented as a diagram, Fig.\ref{D2Gdiag}. A node represents a subcomplex of $D_2(\Gamma)$, while the edges describe the common parts of neighbouring subcomplexes.
 \begin{figure}[ht]
\includegraphics[width=0.8\textwidth]{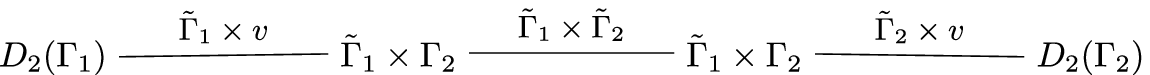}
\caption{Configuration space diagram of $D_2(\Gamma)$ for $\Gamma$ from Fig.\ref{components}.}
\label{D2Gdiag}
\end{figure} 
\noindent The configuration space for distinguishable particles has a similar stucture, except one has to take into account different permutations of particles -- each component of $D_2(\Gamma)$ is covered twice, hence the diagram has two branches, see Fig.\ref{D2Gdiag_dist}.
 \begin{figure}[ht]
\includegraphics[width=0.8\textwidth]{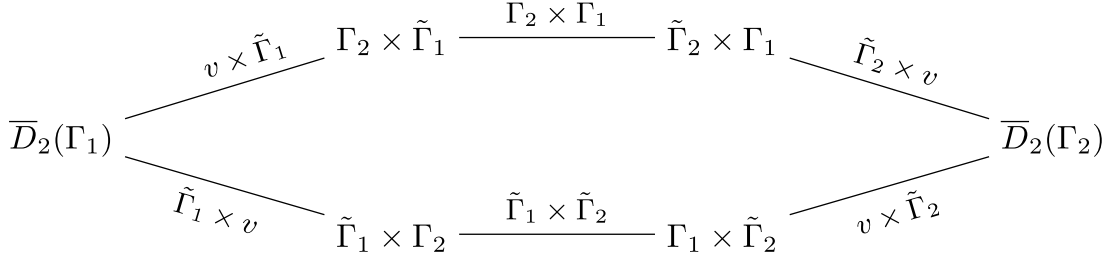}
\caption{Configuration space diagram of $\D_2(\Gamma)$ for $\Gamma$ from Fig.\ref{components}.}
\label{D2Gdiag_dist}
\end{figure} 
\noindent For $n>2$, one has many possibilities of distributing the particles among the components. However, the structure of $D_n(\Gamma)$ is still linear, in the sense that it is a chain of complexes, Fig.\ref{DnGdiag}.
 \begin{figure}[ht]
\includegraphics[width=\textwidth]{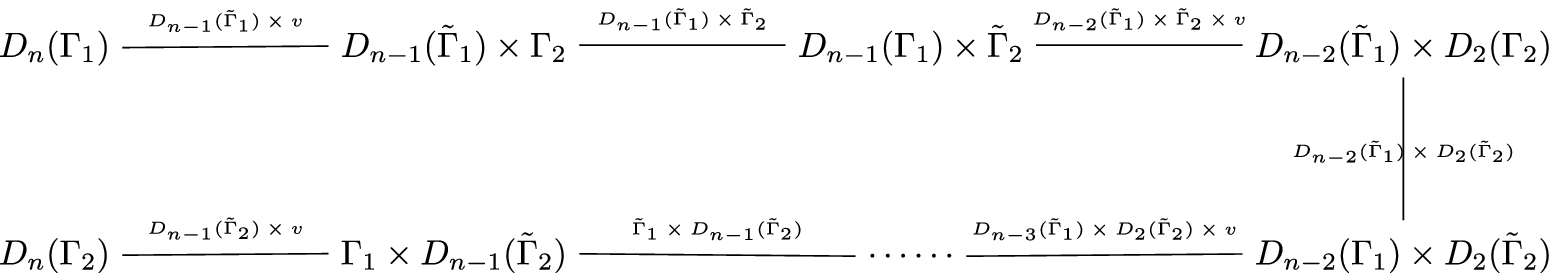}
\caption{Configuration space diagram of $D_n(\Gamma)$ for $\Gamma$ from Fig.\ref{components}.}
\label{DnGdiag}
\end{figure} 
\noindent There are two kinds of connections. Namely, the connections, where the number of particles between the components is the same, and the connections, where one particle moves from $\Gamma_1$ to $\Gamma_2$. Connections of the first kind exist between $D_k(\tilde\Gamma_1)\times D_l(\Gamma_2)$ and $D_k(\Gamma_1)\times D_l(\tilde\Gamma_2)$, where the common part is $D_k(\tilde\Gamma_1)\times D_l(\tilde\Gamma_2)$. Connections of the second kind describe a change in the number of particles, hence they exist between $D_k(\Gamma_1)\times D_l(\tilde\Gamma_2)$ and $D_{k-1}(\tilde\Gamma_1)\times D_{l+1}(\Gamma_2)$, where the common part is $D_{k-1}(\tilde\Gamma_1)\times D_{l}(\tilde\Gamma_2)\times v$. While dealing with distinguishable particles, different distributions of particles give a $n\choose k$-fold splitting of each $\D_k(\Gamma_1)\times \D_l(\tilde\Gamma_2)$. Each component isomorphic to $\D_k(\Gamma_1)\times \D_l(\tilde\Gamma_2)$ is connected with $n-k$ different components that are isomorphic to $\D_{k-1}(\tilde\Gamma_1)\times \D_{l+1}(\Gamma_2)$, see Fig.\ref{D4Gdiag_dist}.
 \begin{figure}[ht]
\includegraphics[width=.9\textwidth]{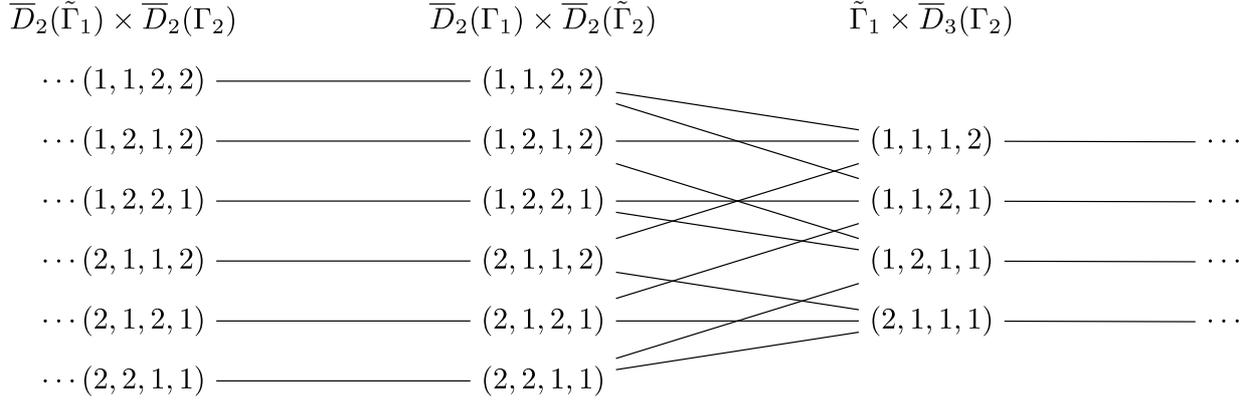}
\caption{A part of the configuration space diagram of $\D_4(\Gamma)$ of a two-component $1$-connected graph. Numbers in brackets denote distribution of particles between the components. For example, $(2,1,1,2)$ in $\D_2(\tilde\Gamma_1)\times \D_2(\Gamma_2)$-column denotes the subcomplex of $\D_4(\Gamma)$, which is isomorphic to $\D_2(\tilde\Gamma_1)\times \D_2(\Gamma_2)$, where particles $1$ and $4$ sit on $\Gamma_2$, and particles $2$ and $3$ sit on $\tilde\Gamma_1$.}
\label{D4Gdiag_dist}
\end{figure} 

As a final remark, note that for a graph, which is a wedge sum of a larger number of components, the presented results for two components can be applied inductively. Choose $\Gamma_1$ to be one of the components and $\Gamma_2'$ to be the wedge sum of the remaining components, $\{\Gamma_i:\ \ i\geq 2\}$. Subgraph $\tilde\Gamma_2'$ is a disjoint sum of graphs. Therefore, $D_k(\Gamma_1)\times D_l(\tilde\Gamma_2')=\sqcup_{i}D_k(\Gamma_1)\times D_l(\tilde\Gamma_i)$ and the configuration space diagram splits in such a node. Detaching inductively the remaining components of $\Gamma_2'$, we obtain the full configuration space diagram.

\section{Second homology group of $D_2(\Gamma)$ and $\D_2(\Gamma)$}\label{sec:two_particles}
In this section, we continue the considerations regarding the example of two particles on a one-connected graphs with two (Fig.\ref{components}), or more components. Using this example, we introduce tools that we finally apply for $D_n(\Gamma)$, where $\Gamma$ is a tree graph. In the end of this section we also give a formula for the second Betti number of $D_2(\Gamma)$, which is a generalisation of the formula by Farber \cite{BF2} for two graphs connected by a single edge.

Consider graph $\Gamma$, which has two components. By the construction of $D_2(\Gamma)$, there are no $3$-cells in the complex, hence $H_2(D_2(\Gamma))$ is free. To compute the second homology, we use Mayer-Vietoris sequence for different components of the configuration space diagram. Let us next briefly introduce the Mayer-Vietoris sequence for the second homology. Let $X$ be any subcomplex of $D_2(\Gamma)$ and let $A$ and $B$ be subcomplexes of $X$ such that $A\cup B=X$. The Mayer-Vietoris sequence for $X$ reads \cite{Hatcher}
\begin{equation}\label{m-v_h2}
0\rightarrow H_2(A\cap B)\xrightarrow{\Phi}H_2(A)\oplus H_2(B)\xrightarrow{\Psi}H_2(X)\xrightarrow{\delta}H_1(A\cap B)\xrightarrow{\Phi}\dots
\end{equation}
Map $\Phi$ acts on $2$-cycles from $A\cap B$ by assigning the same cycle to each summand in the image, i.e. for $x\in\mk{C}_2(A\cap B)$, $\Phi(x)=(x,-x)$. Map $\Psi$ assigns the sum of chains, $\Psi(x,y)=x+y$. The boundary map $\delta$ acts as follows. Every $2$-cycle, $z$, from $A\cup B$ can be written as a sum of $2$-chains form $A$ and $B$ respectively
\[z=x+y,\ x\in\mk{C}_2(A),\ y\in\mk{C}_2(B).\]
Because $\partial z=0$, we have $\partial x=-\partial y$. Chains $\partial x$ and $\partial y$ are $1$-cycles, since $\partial\partial=0$. Moreover, these cycles represent the same element of $H_1(A\cap B)$. In other words, $\delta [z]=[\partial x]=[-\partial y]$. Note that class $[\partial x]$ does not depend on the chosen decomposition of $z$. Because the homology groups over $\ZZ$ are abelian, long exact sequence (\ref{m-v_h2}) can be equivalently written as the short exact sequence
\[0\to\coker(\Phi)\to H_2(X)\to\coker(\Psi)\to0.\]
Recall that cokernel of map $f:\ U\to V$ is defined as $\coker(f)=V/\im(f)$. If the homology groups in the Mayer-Vietoris sequence are free \footnote{In fact, it is enough to require $\coker(\Psi)$ to be free abelian.}, which is the case for $H_2(D_2(\Gamma))$, the sequence splits, i.e.
\[H_2(X)=\coker(\Phi)\oplus \coker(\Psi)=\coker(\Phi)\oplus\im\delta.\]
Finally, we will rather consider elements of $\im\delta$ as elements of $H_2(X)$, i.e. use the isomorphism $\im\delta\cong\coim\delta=H_2(X)/\ker\delta$. Then,
\begin{equation}\label{h2d2_final}
H_2(X)=\coker(\Phi)\oplus\coim\delta.
\end{equation}
\begin{theorem}\label{h2d2}
Let $\Gamma$ be a one-connected graph with two components and let $\Gamma_1,\tilde\Gamma_1,\Gamma_2,\tilde\Gamma_2$ be the components of $\Gamma$, as on Fig.\ref{components}. Then,
\begin{equation}\label{m-v_D2}
\beta_2\left(D_2(\Gamma)\right)=\beta_2(D_2(\Gamma_1))+\beta_2(D_2(\Gamma_2))+\beta_1(\tilde\Gamma_1)\beta_1(\Gamma_2)+\beta_1(\Gamma_1)\beta_1(\tilde\Gamma_2)-\beta_1(\tilde\Gamma_1)\beta_1(\tilde\Gamma_2).
\end{equation}
\end{theorem}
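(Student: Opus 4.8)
The plan is to realise $X=D_2(\Gamma)$ as a union $X=A\cup B$ of two subcomplexes read off from the configuration space diagram (Fig.\ref{D2Gdiag}) and to feed this into the splitting (\ref{h2d2_final}), $H_2(X)=\coker(\Phi)\oplus\coim\delta$. Concretely I would take the ``left half'' $A=D_2(\Gamma_1)\cup(\tilde\Gamma_1\times\Gamma_2)$ and the ``right half'' $B=(\Gamma_1\times\tilde\Gamma_2)\cup D_2(\Gamma_2)$. From the intersection pattern listed just before Fig.\ref{D2G} one checks that $A\cup B=X$ and that all non-adjacent pieces meet trivially, so the only surviving term is $(\tilde\Gamma_1\times\Gamma_2)\cap(\Gamma_1\times\tilde\Gamma_2)=\tilde\Gamma_1\times\tilde\Gamma_2$; hence $A\cap B=\tilde\Gamma_1\times\tilde\Gamma_2$. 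Since each $\tilde\Gamma_i$ is a graph, K\"unneth gives $H_2(A\cap B)=H_1(\tilde\Gamma_1)\otimes H_1(\tilde\Gamma_2)$ and $H_1(A\cap B)=H_1(\tilde\Gamma_1)\oplus H_1(\tilde\Gamma_2)$, with no torsion affecting ranks.

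Next I would compute $H_2(A)$ and $H_2(B)$ by an inner Mayer--Vietoris. For $A$ the two pieces overlap in $D_2(\Gamma_1)\cap(\tilde\Gamma_1\times\Gamma_2)=\tilde\Gamma_1\times v\cong\tilde\Gamma_1$, and the inclusion $\tilde\Gamma_1\times v\hookrightarrow\tilde\Gamma_1\times\Gamma_2$ sends a loop of $\tilde\Gamma_1$ to the generator of the $H_1(\tilde\Gamma_1)$-factor, so the map $H_1(\tilde\Gamma_1\times v)\to H_1(D_2(\Gamma_1))\oplus H_1(\tilde\Gamma_1\times\Gamma_2)$ is injective. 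Therefore the connecting map out of $H_2(A)$ vanishes and $H_2(A)=H_2(D_2(\Gamma_1))\oplus\big(H_1(\tilde\Gamma_1)\otimes H_1(\Gamma_2)\big)$, of rank $\beta_2(D_2(\Gamma_1))+\beta_1(\tilde\Gamma_1)\beta_1(\Gamma_2)$; symmetrically $\mathrm{rank}\,H_2(B)=\beta_2(D_2(\Gamma_2))+\beta_1(\Gamma_1)\beta_1(\tilde\Gamma_2)$. The outer map $\Phi$ in degree $2$ is injective, since its $A$-component factors as $\mathrm{id}\otimes\iota_2$ followed by the inclusion of the direct summand $H_2(\tilde\Gamma_1\times\Gamma_2)$ into $H_2(A)$, where $\iota_2\colon H_1(\tilde\Gamma_2)\hookrightarrow H_1(\Gamma_2)$ is injective because $\Gamma_2$ is a graph (a nonzero cycle of a subgraph stays a nonzero cycle). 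Hence $\mathrm{rank}\,\coker\Phi=\mathrm{rank}\,H_2(A)+\mathrm{rank}\,H_2(B)-\beta_1(\tilde\Gamma_1)\beta_1(\tilde\Gamma_2)$, which is exactly the right-hand side of (\ref{m-v_D2}).

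It remains to show $\coim\delta=0$, equivalently that the map $\Phi\colon H_1(A\cap B)\to H_1(A)\oplus H_1(B)$ one step further along (\ref{m-v_h2}) is injective, since $\coim\delta\cong\im\delta=\ker\Phi$. I expect this to be the main obstacle, as it needs the inclusion-induced maps on $H_1$ into the glued complexes. The way to avoid a full computation of $H_1(A)$ and $H_1(B)$ is to use only the ``cross'' terms. For a class $(\alpha_1,\alpha_2)\in H_1(\tilde\Gamma_1)\oplus H_1(\tilde\Gamma_2)$ the $A$-image contains, in the $H_1(\Gamma_2)$-factor, the loop $\iota_2(\alpha_2)$ run in the $\Gamma_2$-coordinate, while the $B$-image contains $\iota_1(\alpha_1)\in H_1(\Gamma_1)$. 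The gluing relations defining $H_1(A)$ only identify the $H_1(\tilde\Gamma_1)$ loop classes with the $O$-cycles they become in $D_2(\Gamma_1)$, leaving the $H_1(\Gamma_2)$-summand untouched; symmetrically for $B$. Since $\iota_1,\iota_2$ are injective, any nonzero $(\alpha_1,\alpha_2)$ has nonzero image: if $\alpha_2\neq0$ it is detected in $H_1(A)$, and if $\alpha_1\neq0$ it is detected in $H_1(B)$. Thus $\Phi$ is injective, $\im\delta=0$, and $\coim\delta=0$.

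Putting the pieces together, (\ref{h2d2_final}) collapses to $H_2(D_2(\Gamma))=\coker\Phi$, and taking ranks yields (\ref{m-v_D2}). Finally, $H_2(D_2(\Gamma))$ is free because $D_2(\Gamma)$ has no $3$-cells, so its Betti number equals its full rank and any torsion appearing in the intermediate $H_1$ groups is irrelevant to the count.
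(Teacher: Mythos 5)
Your proof is correct, and it departs from the paper's in a genuinely useful way. Structurally, the paper peels the configuration space diagram off one node at a time, running three nested Mayer--Vietoris sequences for the triples $(X_i,A_i,B_i)$, $i=0,1,2$, whereas you split $D_2(\Gamma)$ into two symmetric halves, compute $H_2$ of each half by an inner Mayer--Vietoris sequence, and glue with one outer sequence; this difference is largely bookkeeping, and both routes produce the same rank for $\coker\Phi$. The substantive difference is how $\im\delta=0$ is established. The paper (Lemma \ref{lemma_partial_d2}) relies on Farber's structure theorem (Theorem \ref{farber}), namely that $H_2$ of a two-particle graph configuration space is spanned by products of disjoint $1$-cycles, and then checks that every such product lies wholly inside one side of each decomposition. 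You avoid that input altogether: by exactness, $\im\delta=\ker\left(H_1(A\cap B)\to H_1(A)\oplus H_1(B)\right)$, and you show this kernel is trivial using only the elementary facts that a subgraph inclusion induces an injection on $H_1$ (a graph has no $2$-cells, so $H_1$ is just the group of $1$-cycles) and that the cross summands $H_1(\Gamma_2)$ and $H_1(\Gamma_1)$ pass unreduced into $H_1(A)$ and $H_1(B)$ respectively, since the inner gluing relations only touch the $H_1(\tilde\Gamma_1)$ (resp.\ $H_1(\tilde\Gamma_2)$) coordinates. Your argument is therefore more self-contained and insensitive to planarity, with no need for the $K_{3,3}$/$K_5$ caveats surrounding Theorem \ref{farber}; what the paper's heavier route buys is the template that is actually reused later: for $n$ particles on trees (Theorems \ref{th:h2_tree2comp} and \ref{th:hk_treekcomp}) the vanishing of $\im\delta$ is extracted from an over-complete basis of cycles, and a low-degree injectivity trick like yours would not be available there. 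Two points worth making explicit in your write-up: the K\"unneth identification $H_1(\tilde\Gamma_1\times\tilde\Gamma_2)\cong H_1(\tilde\Gamma_1)\oplus H_1(\tilde\Gamma_2)$ tacitly assumes the $\tilde\Gamma_i$ are connected (the paper makes the same tacit assumption), and the splitting $H_2(X)=\coker(\Phi)\oplus\coim\delta$ that you quote requires $\im\delta$ to be free, which is automatic here because it embeds in the free group $H_1(A\cap B)$ --- and in your case it is trivial anyway.
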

\begin{proof}
Decompose the configuration space part-by-part, as follows
\begin{eqnarray*}
X_0=D_2(\Gamma),\ A_0=D_2(\Gamma_1),\ B_0=(\tilde\Gamma_1\times\Gamma_2)\cup(\Gamma_1\times\tilde\Gamma_2)\cup D_2(\Gamma_2), A_0\cap B_0=\tilde\Gamma_1\times v,\\
X_1=B_0,\ A_1=\tilde\Gamma_1\times\Gamma_2,\ B_1=(\Gamma_1\times\tilde\Gamma_2)\cup D_2(\Gamma_2), A_1\cap B_1=\tilde\Gamma_1\times\tilde\Gamma_2,\\
X_2=B_1,\ A_2=\Gamma_1\times\tilde\Gamma_2,\ B_2=D_2(\Gamma_2),\ A_2\cap B_2=\tilde\Gamma_2\times v.
\end{eqnarray*}
The ansatz is to write Mayer-Vietoris sequence for each $X_i=A_i\cup B_i$ and proceed inductively, beginning with $X_2$. Namely, 
\[\coker\Phi_2=(H_2(A_2)\oplus H_2(B_2))/\im\Phi_2=(H_1(\Gamma_1)\otimes H_1(\tilde\Gamma_2))\oplus H_2(D_2(\Gamma_2)),\]
where in $H_2(A_2)$ we used the K\"unneth theorem. The image of $\Phi_2$ is trivial, because $H_2(A_2\cap B_2)=H_2(\tilde\Gamma_2\times v)=0$, therefore $\coim\Phi_2=0$. Next, we give a characterisation of elements of $\coim\delta$ for $i=2$. Recall that
\[\delta_2:\ H_2(X_2)\to H_1(A_2\cap B_2)\cong H_1(\tilde\Gamma_2).\]
Denote by $z$ a representative of $H_2(X_2)$. Two-cycle $z$ can be decomposed as a sum of $2$-chains from $A_2$ and $B_2$ respectively
\begin{equation}\label{decomp2}
z=x+y,\ x\in\mk{C}_2(A_2),\ y\in\mk{C}_2(B_2).
\end{equation}
The above decomposition is in this case unique, because there are no $2$-cells in the subcomplex $A_2\cap B_2$. The boundary map assigns to $z$ the $1$-cycle $\delta_2 y$. Let us keep for the moment $\coim\delta_2$. Then
\[H_2(B_2)=H_2(X_2)=H_2(D_2(\Gamma_2))\oplus\coim\delta_2\oplus(H_1(\Gamma_1)\otimes H_1(\tilde\Gamma_2))\]
Let us proceed with $i=1$. We have $\coim\Phi_1=H_1(\tilde\Gamma_1)\otimes H_1(\tilde\Gamma_2)$, hence
\[\coker\Phi_1=\left(H_2(B_2)\oplus (H_1(\tilde\Gamma_1)\oplus H_1(\Gamma_2)\right)/\left(H_1(\tilde\Gamma_1)\otimes H_1(\tilde\Gamma_2)\right).\]
Note that the quotient does not affect $H_2(D_2(\Gamma_2))$ and $\coim\delta_2$. This is because every $2$-cycle from $\coim\Phi_1$ is of the form 
\begin{equation}\label{coimphi1}
z=c\otimes c',\ [c]\in H_1(\tilde\Gamma_1),\  [c']\in H_1(\tilde\Gamma_2).
\end{equation}
Cycle $z$ is a $2$-cycle from $D_2(\tilde\Gamma_1\times \tilde\Gamma_2)$. Therefore, decomposition (\ref{decomp2}) for such a $2$-cycle yields $y=0$, i.e. $z\in\ker\delta_2$. Moreover, none of the representatives of a homology class from $H_2(D_2(\Gamma_2))$ is of the form (\ref{coimphi1}). Therefore,
\[H_2(B_1)=H_2(D_2(\Gamma_2))\oplus\coim\delta_2\oplus\coim\delta_1\oplus\frac{(H_1(\Gamma_1)\otimes H_1(\tilde\Gamma_2))\oplus (H_1(\tilde\Gamma_1)\otimes H_1(\Gamma_2))}{H_1(\tilde\Gamma_1)\otimes H_1(\tilde\Gamma_2)}.\]
Finally, for $i=0$, by an analogical reasoning as in the case of $i=2$, we have $\im\Phi_0=0$, hence
\[H_2(D_2(\Gamma))=H_2(D_2(\Gamma_1))\oplus H_2(D_2(\Gamma_2))\oplus\coim\delta\oplus\frac{(H_1(\Gamma_1)\otimes H_1(\tilde\Gamma_2))\oplus (H_1(\tilde\Gamma_1)\otimes H_1(\Gamma_2))}{H_1(\tilde\Gamma_1)\otimes H_1(\tilde\Gamma_2)},\]
where $\coim\delta=\oplus_{i=0}^2\coim\delta_i$. Using a theorem by Farber \cite{BF1}, we prove in lemma \ref{lemma_partial_d2} that $\coim\delta=0$, which completes the proof.
\end{proof}
\noindent The last thing to show is the fact that $\im\delta=0$. To this end, we consider a specific over-complete basis of the second homology group.  First, we briefly review the known facts about the second homology group of the two-particle configuration spaces.
\begin{theorem}[Farber \cite{BF1}]\label{farber}
For a planar graph $\Gamma$, there exists a basis of $H_2(\D_2(\Gamma))$, where the representatives are of the form 
\[z=c\otimes c', [c],[c']\in H_1(\Gamma).\]
 Moreover, cycles $c$ and $c'$ are necessarily disjoint, i.e. for
\[c=\sum_ia_i e_i,\ c'=\sum_j b_j e_j',\]
we have $e_i\cap e_j'=\emptyset$ for all $i,j$. Then, $z=\sum_{i,j}a_i b_j e_i\times e_j'$. Analogical result holds for a basis of $H_2(D_2(\Gamma))$.
\end{theorem}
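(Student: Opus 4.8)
The plan is to reduce the statement to a purely homological–algebraic claim about the group of $2$-cycles, and to invoke planarity only at the very last step where it is genuinely needed. First I would exploit the fact that $\D_2(\Gamma)$ has no cells of dimension $\geq 3$: consequently $\im\partial_3=0$ and $H_2(\D_2(\Gamma))=\ker\partial_2$ (the group of $2$-cycles) is free abelian. Now $\D_2(\Gamma)$ is a cubical subcomplex of the product complex $\Gamma\times\Gamma$, obtained by discarding exactly those product cells $\sigma_1\times\sigma_2$ with $\sigma_1\cap\sigma_2\neq\emptyset$. Hence the inclusion induces an injection of chain-level cycle groups $\ker\partial_2(\D_2(\Gamma))\hookrightarrow\ker\partial_2(\Gamma\times\Gamma)$, and I would identify the target explicitly. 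Since $\Gamma$ is one-dimensional, its cellular chain complex satisfies $H_1(\Gamma)=\ker\partial_1$ and has no higher cells, so a direct computation (equivalently, the cellular K\"unneth theorem with vanishing Tor term) gives $\ker\partial_2(\Gamma\times\Gamma)=H_1(\Gamma)\otimes H_1(\Gamma)$.

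Combining these observations, $H_2(\D_2(\Gamma))$ is identified with the subgroup
\[ W=\{\,z\in H_1(\Gamma)\otimes H_1(\Gamma):\ z\ \text{has chain support on disjoint edge pairs}\,\}, \]
i.e. with those $2$-cycles of $\Gamma\times\Gamma$ whose expansion $\sum z_{e,e'}\,e\times e'$ in the edge basis satisfies $z_{e,e'}=0$ whenever $e\cap e'\neq\emptyset$. The subtlety I would stress is that an element of $W$ lies in $H_1(\Gamma)\otimes H_1(\Gamma)$, so it is \emph{abstractly} a sum of tensors $c\otimes c'$ of $1$-cycles, but the $1$-cycles appearing in such a decomposition need not be disjoint, and the associated product chain may meet the diagonal. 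The entire content of the theorem is therefore that $W$ admits a basis of \emph{disjoint} products. The easy half is independence: a family $c_i\otimes c_j'$ with the $c_i$ and the $c_j'$ linearly independent in $H_1(\Gamma)$ is automatically independent in $H_1(\Gamma)\otimes H_1(\Gamma)$, so I only need to exhibit enough disjoint products and check that their number equals $\mathrm{rk}\,W$.

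The spanning half is where I would invest the effort, and it is the step that uses planarity. Fixing a planar embedding of $\Gamma$, the boundaries of the bounded faces give a distinguished basis of $H_1(\Gamma)$ together with an inside/outside (Jordan-curve) partial order coming from face nesting and adjacency. The plan is to peel an arbitrary $z\in W$ apart one disjoint product at a time: order the faces, isolate the outermost pair of cycles occurring in $z$, and argue that planarity allows one to replace an overlapping product-summand, modulo terms already accounted for, by a disjoint one, since a planar cycle bounds a disc in $S^2$ and hence cannot be topologically linked with another. I expect this separation lemma to be the main obstacle: it is exactly the point at which a non-planar graph (a subdivided $K_5$ or $K_{3,3}$) would produce a $2$-cycle supported off the diagonal that is \emph{not} a combination of disjoint products, reflecting the Van~Kampen/linking obstruction, which is why planarity cannot be dropped. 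As an independent cross-check on the rank I would compute $\beta_2(\D_2(\Gamma))=\chi(\D_2(\Gamma))-\beta_0+\beta_1(\D_2(\Gamma))$ by cell-counting together with the known value of $\beta_1$ (formula (\ref{h1star_dist}) and its analogues) and verify that it matches the number of independent disjoint products.

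Finally, for the indistinguishable case $D_2(\Gamma)$ I would run the identical argument in parallel: $D_2(\Gamma)$ likewise has no cells of dimension $\geq 3$, so $H_2(D_2(\Gamma))=\ker\partial_2$ is free and embeds into the $2$-cycles of the symmetric product, where the disjoint unordered products $\{c,c'\}=\sum a_ib_j\{e_i,e_j'\}$ play the role of $c\otimes c'$. Alternatively one may descend from $\D_2(\Gamma)$ through the free double cover $\D_2(\Gamma)\to D_2(\Gamma)$ induced by the $S_2$-action and identify the product basis with its symmetrization. In either case the planarity-based separation lemma is the common engine.
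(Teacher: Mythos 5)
First, a point of reference: the paper does not prove this statement at all --- it is quoted as a known result of Barnett--Farber \cite{BF1}, and the substantial work lies in that reference. So your proposal is measured against the literature proof rather than anything in the paper itself.

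Your reduction is correct and cleanly executed: since $\D_2(\Gamma)$ has no cells of dimension $\geq 3$, $H_2(\D_2(\Gamma))=\ker\partial_2$ is free; cycles inject into those of $\Gamma\times\Gamma$; and because $C_1(\Gamma)/Z_1(\Gamma)$ is free, one indeed gets $\ker\partial_2(\Gamma\times\Gamma)=Z_1(\Gamma)\otimes Z_1(\Gamma)=H_1(\Gamma)\otimes H_1(\Gamma)$, so that $H_2(\D_2(\Gamma))$ is identified with the subgroup $W$ of tensors supported on disjoint edge pairs. (Note also that a $2$-chain of $\Gamma\times\Gamma$ supported off the diagonal has boundary supported off the diagonal, so this identification goes both ways.) The genuine gap is that the theorem \emph{is} the separation lemma you defer: the claim that, for planar $\Gamma$, every element of $W$ is an integer combination of products $c\otimes c'$ of \emph{disjoint} $1$-cycles. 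Your sketch --- order the faces of a planar embedding, peel off an ``outermost'' pair, and replace an overlapping summand by a disjoint one because planar cycles ``cannot be linked'' --- is not an argument; no mechanism is given for rewriting a tensor whose support avoids the diagonal but whose natural decomposition into products does not, and it is exactly here that all the difficulty sits (this occupies the core of \cite{BF1}, where the support of a cycle is analysed locally near vertices of the graph). Your proposed cross-check via Euler characteristics cannot close this gap either: it computes $\mathrm{rk}\,H_2$, but to conclude you would need an independent count of the maximal number of linearly independent disjoint products, which is not provided and is itself essentially equivalent to the theorem. The same incompleteness then propagates to the $D_2(\Gamma)$ case, which you reduce to the same unproven engine. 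In short: correct and worthwhile framing, but the heart of the proof is missing.
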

\noindent Hence, for planar graphs the over-complete basis that we will consider, consists of all possible pairs of disjoint cycles in $\Gamma$. The construction of an over-complete basis for two particles on non-planar graphs requires a slight refinement. Recall first a theorem by Kuratowski \cite{kuratowski}, which states that every non-planar graph contains a subgraph that is isomorphic to graph $K_{3,3}$ or $K_5$. Furthermore, it was first shown by Abrams \cite{AbramsPhD} that graphs $K_{3,3}$ and $K_5$ are the only possible graphs, whose two-particle distinguished (discrete) configuration spaces are closed surfaces. Hence, in the case of distinguishable particles, for every subgraph of $\Gamma$, which is isomorphic to $K_{3,3}$ or $K_5$, we add an element to the over-complete basis, which is isomorphic to $\D_2(K_{3,3})$ or $\D_2(K_{5})$ respectively. For indistinguishable particles, there are no graphs, whose two-particle configuration spaces are isomorphic to a closed surface, hence it is enough to consider only the products of cycles as an over-complete basis of $H_2(D_2(\Gamma))$.
\begin{lemma}\label{lemma_partial_d2}
Let $\Gamma$ be a one-connected graph with two components and $\{(A_i,B_i)\}_{i=0}^2$ be the subcomplexes of $D_2(\Gamma)$ from the proof of theorem \ref{h2d2}. Moreover, let $\delta_i$ be the boundary map from the Mayer-Vietoris sequence
\[\delta_i:\ H_2(A_i\cup B_i)\to H_1(A_i\cap B_i).\] 
Then, $\im\delta_i=0$ for all $i$.
\end{lemma}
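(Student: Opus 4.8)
The plan is to show that for each $i$ the boundary map $\delta_i$ annihilates every element of a convenient over-complete basis of $H_2(A_i\cup B_i)$, so that $\im\delta_i=0$ on all of $H_2$. Since the image is additively generated by the images of basis elements, it suffices to check this on generators. By Theorem \ref{farber} (Farber) applied to the relevant subcomplexes, a spanning set of $H_2$ consists of cycles of the form $z=c\otimes c'$, where $[c],[c']$ are $1$-cycles supported on \emph{disjoint} subgraphs of $\Gamma$. The strategy is therefore to trace, for each such product cycle, its Mayer--Vietoris decomposition $z=x+y$ with $x\in\mk{C}_2(A_i)$, $y\in\mk{C}_2(B_i)$, and verify that $\delta_i[z]=[\partial y]=0$ in $H_1(A_i\cap B_i)$.

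The key observation, which I would carry out first, is that the intersection $A_i\cap B_i$ in each of the three cases is a product of the form $(\text{graph})\times v$ or $\tilde\Gamma_1\times\tilde\Gamma_2$, and in each case it is \emph{one-dimensional in one factor}, since one of the two particles is pinned at the cut vertex $v$ (for $i=0,2$) or confined to a small region. Concretely, $A_2\cap B_2=\tilde\Gamma_2\times v$, $A_1\cap B_1=\tilde\Gamma_1\times\tilde\Gamma_2$, and $A_0\cap B_0=\tilde\Gamma_1\times v$. I would then argue that the decomposition $z=x+y$ is forced: because a product cycle $c\otimes c'$ has its two factors supported on disjoint parts of the graph, the particle content of $z$ relative to the splitting $\Gamma_1,\Gamma_2$ is unambiguous, so $z$ lies entirely in one of $A_i$, $B_i$, giving $y=0$ (or $x=0$). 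A cycle lying entirely in, say, $B_i$ has $\partial y=\partial z=0$, hence $\delta_i[z]=0$. The only cycles that genuinely straddle the intersection are exactly those of the form \eqref{coimphi1}, $z=c\otimes c'$ with $[c]\in H_1(\tilde\Gamma_1)$, $[c']\in H_1(\tilde\Gamma_2)$; but these lie wholly inside $D_2(\tilde\Gamma_1\times\tilde\Gamma_2)=A_i\cap B_i$ (shown already in the proof of Theorem \ref{h2d2} for $i=1$), so again one summand vanishes and $\delta_i[z]=0$.

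I expect the main obstacle to be the genuinely mixed case, where a product cycle has one factor $c$ that is a $1$-cycle in $\Gamma_1$ passing through the cut vertex $v$ while the companion particle sits in $\Gamma_2$. Here the naive decomposition is not immediately supported on a single side, and one must show the boundary $\partial y$, though possibly a nonzero $1$-cycle in $A_i\cap B_i$, is nonetheless null-homologous there. The resolution is that $A_i\cap B_i$ carries a product structure with a point factor (the pinned particle at $v$), so $H_1(A_i\cap B_i)\cong H_1(\tilde\Gamma_j)$ is generated by \emph{single}-particle loops; the boundary $\partial y$ produced by a genuine two-particle exchange cycle in $\Gamma$ is a $1$-cycle in which the exchange has been ``resolved'' into a loop of one particle with the other stationary at $v$, and by Theorem \ref{farber} every class of $H_2$ is represented by products of \emph{disjoint} cycles, so no such mixed representative contributes a nonzero class. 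Making this last point rigorous — that the disjointness forced by Farber's theorem is exactly what kills the potential boundary contributions — is the crux. Assembling the three cases $i=2,1,0$ then yields $\im\delta_i=0$ for all $i$, as claimed.
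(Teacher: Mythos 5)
Your overall strategy is the same as the paper's: take the over-complete basis of $H_2$ given by Theorem \ref{farber}, whose representatives are products $z=c\otimes c'$ of disjoint $1$-cycles, and show that $\delta_i$ kills each generator because $z$ is supported entirely in $A_i$ or entirely in $B_i$. Your first two paragraphs carry this out correctly for the unmixed cases. The genuine gap is exactly the step you yourself flag as the crux: for the ``mixed case'' ($c$ a $1$-cycle in $\Gamma_1$ passing through $v$, with the companion cycle $c'$ in $\Gamma_2$) you propose to show that the Mayer--Vietoris boundary $[\partial y]$, ``though possibly a nonzero $1$-cycle in $A_i\cap B_i$, is nonetheless null-homologous there,'' and you offer only a heuristic about exchanges being resolved into single-particle loops. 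No proof of this is given, and nothing in Theorem \ref{farber} supplies one: $H_1(A_i\cap B_i)\cong H_1(\tilde\Gamma_j)$ is in general nonzero, so null-homology of a candidate boundary is not automatic.

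What is missing is the observation that the mixed case never straddles the decomposition in the first place, so no null-homology argument is needed. Since $v$ is a cut vertex, every $1$-cycle of $\Gamma$ is a sum of $1$-cycles each wholly contained in $\Gamma_1$ or in $\Gamma_2$; this fact (and not merely the disjointness of $c$ from $c'$) is what makes your phrase ``the particle content of $z$ is unambiguous'' rigorous. Granting it, in your mixed case disjointness forces $c'\subset\tilde\Gamma_2$, hence
\[z=c\otimes c'\in\mk{C}_2(\Gamma_1\times\tilde\Gamma_2),\]
and the subcomplex $\Gamma_1\times\tilde\Gamma_2$ is contained in $B_0$, is contained in $B_1$, and equals $A_2$. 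Thus for every $i$ the cycle $z$ lies entirely on one side of the pair $(A_i,B_i)$, the decomposition $z=x+y$ may be chosen with one summand zero, and $\delta_i[z]=[\partial z]=0$. Checking the other possible placements of $c$ and $c'$ in the same way (both in $\Gamma_2$; $c\subset\Gamma_2$ through $v$ and $c'\subset\tilde\Gamma_1$; both avoiding $v$) exhausts all generators and yields $\im\delta_i=0$ for $i=0,1,2$. This support argument is precisely the paper's proof of Lemma \ref{lemma_partial_d2}; your proposal needs it in place of the unproven null-homology claim.
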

\begin{proof}
Because $H_2(D_2(\Gamma))$ contains $\coim\delta$ as an independent contribution, theorem \ref{farber} applies also to $2$-cycles representing $\coim\delta$. Let us begin with  The strategy for the proof is to show that all possible products of disjoint $1$-cycles in $X_i=A_i\cup B_i$ are in $\ker\delta_i$.
 
For $\delta_2:\ H_2((\tilde\Gamma_1\times\Gamma_2)\cup D_2(\Gamma_2))\to H_1(\tilde\Gamma_2\times v)$, a $2$-cycle, which does not belong to $\ker\delta_2$, has a nonzero part in both $\mk{C}_2(\tilde\Gamma_1\times\Gamma_2)$ and $\mk{C}_2(D_2(\Gamma_2))$. Let $z$ be a representative of such a $2$-cycle from the basis in theorem \ref{farber}, i.e. $z=c\otimes c'$. Note that every $1$-cycle in $\Gamma$ can be written as a sum of cycles that are wholly contained in $\Gamma_1$ or $\Gamma_2$. Therefore, the only possibility for the choice of $z$ to contain cells from both $A_2$ and $B_2$ is to take $c\in\mk{C}_1(\Gamma_2)$ that contains an edge adjacent to $v$. However, $c$ and $c'$ are disjoint, hence $c'$ must be contained in $\tilde\Gamma_1$ or $\tilde\Gamma_2$. This is a contradiction, because then $z\in\mk{C}_2(\tilde\Gamma_1\times\Gamma_2)$ or $z\in\mk{C}_2(D_2(\Gamma_2))$ respectively, which means that $z\in\ker\delta_2$. Analogical reasoning for $A_0$ and $B_0$ leads to the conclusion that $\im\delta_0=0$. Finally, consider $\delta_1:\ H_2((\Gamma_1\times\tilde\Gamma_2)\cup (\tilde\Gamma_1\times\Gamma_2))\to H_1(\tilde\Gamma_1\times\tilde\Gamma_2)$. The desired $2$-cycle can be a product of 
\[c\in\mk{C}_1(\Gamma_1),\ c\cap v\neq\emptyset,\ c'\in\mk{C}_1(\tilde\Gamma_2),\ {\rm or}\ c\in\mk{C}_1(\tilde\Gamma_1),\ c'\in\mk{C}_1(\Gamma_2),\ c'\cap v\neq\emptyset.\]
It is straightforward to see that in both cases $z\in\mk{C}_2(\Gamma_1\times\tilde\Gamma_2)$ or $z\in\mk{C}_2(\tilde\Gamma_1\times\Gamma_2)$ respectively. Therefore, $z\in\ker\delta_1$.
\end{proof}
\noindent The formula for the second Betti number for two distinguishable particles can be obtained using the same strategy as in the proof of theorem \ref{h2d2}.
\begin{coro}
Let $\Gamma$ be a one-connected graph with two components and let $\Gamma_1,\tilde\Gamma_1,\Gamma_2,\tilde\Gamma_2$ be the components of $\Gamma$, as on Fig.\ref{components}. Then,
\begin{equation}\label{m-v_D2dist}
\beta_2\left(\D_2(\Gamma)\right)=\beta_2(\D_2(\Gamma_1))+\beta_2(\D_2(\Gamma_2))+2\left(\beta_1(\tilde\Gamma_1)\beta_1(\Gamma_2)+\beta_1(\Gamma_1)\beta_1(\tilde\Gamma_2)-\beta_1(\tilde\Gamma_1)\beta_1(\tilde\Gamma_2)\right).
\end{equation}
\end{coro}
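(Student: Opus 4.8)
The plan is to re-run the inductive Mayer--Vietoris argument of Theorem~\ref{h2d2} almost verbatim, only replacing the configuration space diagram of Fig.~\ref{D2Gdiag} by its distinguishable counterpart in Fig.~\ref{D2Gdiag_dist}. Since $\D_2(\Gamma)$ again has no $3$-cells, $H_2(\D_2(\Gamma))$ is free and every Mayer--Vietoris sequence splits, so I would decompose $\D_2(\Gamma)$ part-by-part into pairs of subcomplexes $\{(A_i,B_i)\}$ exactly as before and read off $H_2(\D_2(\Gamma))=\bigoplus_i\coker\Phi_i\oplus\coim\delta$. The single structural change is that each \emph{mixed} node of the diagram (one particle on each side of the cut vertex) now carries a label recording which of the two particles sits where; consequently the single node $\Gamma_1\times\tilde\Gamma_2$ of the indistinguishable diagram is replaced by two nodes, and likewise for $\tilde\Gamma_1\times\Gamma_2$ and for the shared overlap $\tilde\Gamma_1\times\tilde\Gamma_2$. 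The two \emph{pure} ends, both particles on one component, remain single copies of $\D_2(\Gamma_1)$ and $\D_2(\Gamma_2)$.

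First I would collect the $\coker\Phi_i$ contributions. The pure ends give $\beta_2(\D_2(\Gamma_1))+\beta_2(\D_2(\Gamma_2))$. Applying the K\"unneth theorem to each mixed node produces a copy of $H_1(\Gamma_1)\otimes H_1(\tilde\Gamma_2)$ and of $H_1(\tilde\Gamma_1)\otimes H_1(\Gamma_2)$ for \emph{each} of the two orderings, while the maps $\Phi_i$ quotient out the shared overlaps $H_1(\tilde\Gamma_1)\otimes H_1(\tilde\Gamma_2)$, again once per ordering. This is exactly the mechanism that multiplies the three cross terms of (\ref{m-v_D2}) by $2$, yielding $2\bigl(\beta_1(\tilde\Gamma_1)\beta_1(\Gamma_2)+\beta_1(\Gamma_1)\beta_1(\tilde\Gamma_2)-\beta_1(\tilde\Gamma_1)\beta_1(\tilde\Gamma_2)\bigr)$.

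It then remains to establish the distinguishable analogue of Lemma~\ref{lemma_partial_d2}, namely $\im\delta_i=0$ for all $i$, which forces $\coim\delta=0$ and completes the count. Here I would appeal to Farber's Theorem~\ref{farber} for $\D_2$, but with the refinement already announced before that lemma: for distinguishable particles the over-complete basis of $H_2(\D_2(\Gamma))$ comprises not only products $c\otimes c'$ of disjoint $1$-cycles but also the closed-surface cycles isomorphic to $\D_2(K_{3,3})$ or $\D_2(K_5)$, one for every $K_{3,3}$- or $K_5$-subgraph of $\Gamma$. For the product cycles the argument of Lemma~\ref{lemma_partial_d2} carries over unchanged: every $1$-cycle of $\Gamma$ splits as a sum of cycles supported in $\Gamma_1$ or in $\Gamma_2$, and disjointness of $c$ and $c'$ rules out a representative straddling any overlap, so each such cycle lies in the appropriate $\ker\delta_i$.

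The genuinely new point, and the step I expect to be the main obstacle, is to handle the surface cycles. I would argue that each such cycle automatically lies in $\ker\delta_i$ because its supporting subgraph sits entirely inside one component. Indeed, $K_{3,3}$ and $K_5$ are $2$-connected, so a subgraph of $\Gamma$ isomorphic to either of them cannot be separated by the cut vertex; hence it is contained in $\Gamma_1$ or in $\Gamma_2$, the associated surface lives in $\D_2(\Gamma_1)$ or $\D_2(\Gamma_2)$, it contributes to the corresponding $\beta_2(\D_2(\Gamma_i))$ term, and it never meets a nontrivial overlap subcomplex. With both families of basis cycles shown to lie in $\ker\delta_i$ we obtain $\im\delta_i=0$, and assembling the $\coker\Phi_i$ contributions gives (\ref{m-v_D2dist}).
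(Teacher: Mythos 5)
Your proposal is correct and follows essentially the same route as the paper: the same part-by-part Mayer--Vietoris decomposition with each mixed node doubled into two ordering-labelled copies (which produces the factor of $2$ on the cross terms), together with the distinguishable analogue of Lemma~\ref{lemma_partial_d2} in which product cycles are handled as before and the $\D_2(K_{3,3})$, $\D_2(K_5)$ surface cycles are killed by $\delta$ because their supporting subgraphs lie in a single component. Your explicit appeal to $2$-connectedness of $K_{3,3}$ and $K_5$ to justify that containment is a nice touch --- the paper merely asserts it.
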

\begin{proof}
Denote by $(\X_i,\A_i,\B_i),\ i=0,1,2$ the following subcomplexes of $\D_2(\Gamma)$.
\begin{eqnarray*}
\X_0=\D_2(\Gamma),\ \A_0=\D_2(\Gamma_1),\  \A_0\cap \B_0=(\tilde\Gamma_1\times v)\sqcup(v\times\tilde\Gamma_1),\\
\X_1=\B_0,\ \A_1=(\tilde\Gamma_1\times\Gamma_2)\sqcup(\Gamma_2\times\tilde\Gamma_1),\ \A_1\cap \B_1=(\tilde\Gamma_1\times\tilde\Gamma_2)\sqcup(\tilde\Gamma_2\times\tilde\Gamma_1),\\
\X_2=\B_1,\ \A_2=(\Gamma_1\times\tilde\Gamma_2)\sqcup(\tilde\Gamma_2\times\Gamma_1),\ \B_2=\D_2(\Gamma_2),\ \A_2\cap \B_2=(\tilde\Gamma_2\times v)\sqcup(v\times\tilde\Gamma_2).
\end{eqnarray*}
The contribution to $H_2(\D_2(\Gamma))$ from products of $1$-cycles from $\Gamma_1$ and $\Gamma_2$ is twice the corresponding contribution from $H_2(D_2(\Gamma))$. This is because $\A_1$ and $\A_2$ are disjoint sums of subcomplexes that are both isomorphic to $A_1$ and $A_2$ from the proof of theorem \ref{h2d2} respectively, and the relations imposed by $\im\Phi_i$ do not mix the components. In other words, 
\[\A_i\cong A_i\sqcup A_i,\ \A_i\cap \B_i\cong (A_i\cap B_i)\sqcup (A_i\cap B_i),\ H_k(\A_i)\cong H_k(A_i)\oplus H_k(A_i),\ k=1,2.\]
The argument showing that $\coim\delta=0$ passes without changes for the part of the over-complete basis that consists of products of disjoint $1$-cycles. For completeness, consider the additional elements of the over-complete basis for $\D_2(\Gamma)$ stemming from all subgraphs of $\Gamma$ that are isomorphic to $K_{3,3}$ or $K_5$. Note that such a subgraph must be necessarily contained either in $\Gamma_1$ or in $\Gamma_2$. Hence, the $2$-cycles that are isomorphic to $\D_2(K_{3,3})$ or $\D_2(K_{5})$ are always contained $\D_2(\Gamma_1)$ or $\D_2(\Gamma_2)$ respectively, i.e. mapped by $\delta$ to zero. As we explain in section \ref{sec:partial}, $\im\delta$ is no longer trivial for the three-particle case.
\end{proof}
Note that the derived formulae for $\beta_2\left(D_2(\Gamma)\right)$ and $\beta_2\left(\D_2(\Gamma)\right)$ can be written in a simpler form using $\mu_i:=\beta_1(\Gamma_i)-\beta_1(\tilde\Gamma_i)$. Then, 
\begin{equation}\label{h2d2_indist}
\beta_2\left(D_2(\Gamma)\right)=\beta_2(D_2(\Gamma_1))+\beta_2(D_2(\Gamma_2))+\beta_1(\Gamma_1)\beta_1(\Gamma_2)-\mu_1\mu_2.
\end{equation}
\begin{equation}\label{h2d2_dist}
\beta_2\left(\D_2(\Gamma)\right)=\beta_2(\D_2(\Gamma_1))+\beta_2(\D_2(\Gamma_2))+2\left(\beta_1(\Gamma_1)\beta_1(\Gamma_2)-\mu_1\mu_2\right).
\end{equation}
Numbers $\mu_i$ are the numbers of cycles lost after detaching vertex $v$. If $\Gamma_i$ is connected with $v$ with $E_i$ edges, then $\mu_i=E_i-1$.

The above formulae can be easily extended to a graph with more than two components. Assume that $\Gamma$ has three components, $\{\Gamma_i\}_{i=1}^3$. Then, $\Gamma$ can be viewed as a two-component graph, where the first component is $\Gamma_1$ and the second component is the wedge sum of $\Gamma_2$ and $\Gamma_3$, which we denote by $\Gamma_{23}$. Moreover, $\tilde\Gamma_{23}=\Gamma_2\sqcup\Gamma_3$. Next, we apply formula (\ref{m-v_D2}), using the fact that $\beta_1(\tilde\Gamma_{23})=\beta_1(\tilde\Gamma_2)+\beta_1(\tilde\Gamma_3)$ and $\mu_{23}=\mu_2+\mu_3$.
\begin{eqnarray*}
\beta_2\left(D_2(\Gamma)\right)=\beta_2(D_2(\Gamma_1))+\beta_2(D_2(\Gamma_{23}))+\beta_1(\Gamma_1)\left(\beta_1(\Gamma_2)+\beta_1(\Gamma_3)\right)-\mu_1(\mu_2+\mu_3).
\end{eqnarray*}
Finally, we put $\beta_2(D_2(\Gamma_{23}))=\beta_2(D_2(\Gamma_2))+\beta_2(D_2(\Gamma_3))+\beta_1(\Gamma_2)\beta_1(\Gamma_3)-\mu_2\mu_3$. Then, 
\begin{eqnarray*}
\beta_2\left(D_2(\Gamma)\right)=\sum_i\beta_2(D_2(\Gamma_i))+\sum_{i< j}(\beta_1(\Gamma_i)\beta_1(\Gamma_j)-\mu_i\mu_j).
\end{eqnarray*}
For distinguishable particles, we have
\begin{eqnarray*}
\beta_2\left(\D_2(\Gamma)\right)=\sum_i\beta_2(\D_2(\Gamma_i))+2\sum_{i< j}(\beta_1(\Gamma_i)\beta_1(\Gamma_j)-\mu_i\mu_j).
\end{eqnarray*}

\section{Tree graphs}\label{sec:tree}
Let us first compute the second homology group for $n$ particles on a tree, which consists of two star graphs, $S$ and $S'$, connected by an edge. We will assume that both $S$ and $S'$ are sufficiently subdivided for $n$ particles, see Fig.\ref{tree-components}.
 \begin{figure}[ht]
 
\includegraphics[width=0.6\textwidth]{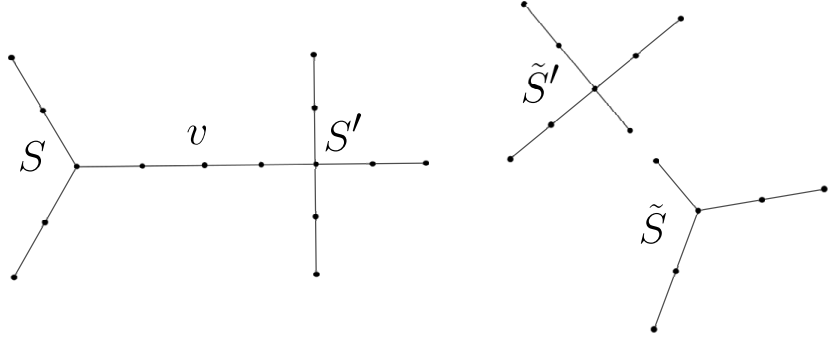}
\caption{A tree graph, which is a wedge of two star graphs, and its components for $n=3$. Compare with Fig.\ref{components}.}
\label{tree-components}
\end{figure} 
The general procedure for computing $H_2(D_n(T))$ will be to decompose the $n$-particle configuration space part-by-part in the following way. 
\begin{eqnarray*}
X_0=D_n(T),\ A_0=D_n(S),\ B_0&=&(D_{n-1}(\tilde S)\times S')\cup(D_{n-1}(S)\times\tilde S')\cup\dots\cup D_n(S'), \\
X_1'=B_0,\ A_1'=D_{n-1}(\tilde S)\times S',\ B_1'&=&(D_{n-1}(S)\times\tilde S')\cup\dots\cup D_n(S'),\\
X_1=B_1',\ A_1=D_{n-1}(S)\times \tilde S',\ B_1&=&(D_{n-2}(\tilde S)\times S')\cup\dots\cup D_n(S'),\\
&\vdots& \\
X_{n-1}=B'_{n-1},\ A_{n-1}=S\times D_{n-1}(\tilde S'),\ B_{n-1}&=&D_n(S').
\end{eqnarray*}
We distinguish two kinds of subcomplexes with respect to the type of their intersections. A pair $(A_k',B_k')$ of the first kind, describes subcomplexes that have the same number of particles on $S$ and $S'$, i.e.
\begin{eqnarray*} 
A_k'=D_{n-k}(\tilde S)\times D_k(S'),\ B_k'=(D_{n-k}(S)\times D_k(\tilde S'))\cup\dots\cup D_n(S'),\\ A_k'\cap B_k'=D_{n-k}(\tilde S)\times D_k(\tilde S'),\ X_k'=A_k'\cup B_k',\ k\in\{1,2,\dots,n-1\}.
\end{eqnarray*}
The second kind of subcomplexes describes pairs, where the numbers of particles on $S$ and $S'$ are different. 
\begin{eqnarray*} 
A_k=D_{n-k}(S)\times D_k(\tilde S'),\ B_k=(D_{n-k-1}(\tilde S)\times D_{k+1}(S'))\cup\dots\cup D_n(S'),\\ A_k\cap B_k=D_{n-k-1}(\tilde S)\times D_k(\tilde S')\times v,\ X_k=A_k\cup B_k,\ k\in\{0,1,\dots,n-1\}.
\end{eqnarray*}
In the above notation, $X_0\supset X_1'\supset X_1\supset\dots\supset X_n$. Next, we write the Mayer-Vietoris sequence for each pair of subcomplexes, as in (\ref{m-v_h2}).
\begin{equation}\label{m-v_h2tree}
0\rightarrow H_2(A_k\cap B_k)\xrightarrow{\Phi_k}H_2(A_k)\oplus H_2(B_k)\xrightarrow{\Psi_k}H_2(X_k)\xrightarrow{\delta_k}H_1(A_k\cap B_k)\xrightarrow{}\dots
\end{equation}
We use the fact that $H_3(X_k)=0$, since there are no $3$-cells in the Morse complex. Because homology groups for tree graphs are free, the Mayer-Vietoris sequence splits and we have
\begin{equation}\label{h2xk_tree}
H_2(X_k)=\coker(\Phi_k)\oplus\coim\delta_k,\ H_2(X'_k)=\coker(\Phi'_k)\oplus\coim\delta'_k.
\end{equation}
There are a few differences between the case of tree graphs and a general one-connected graph with two components, which allow to compute the second homology for any number of particles. The first simplification comes from the fact that $D_k(S)$ is homotopy equivalent to $D_k(\tilde S)$ for $k<n$, and $D_k(S)$ is homotopy equivalent to a wedge of circles. The same holds for distinguishable particles. For $n=2$, formulae (\ref{h2d2_dist}) and (\ref{h2d2_indist}) yield $\beta_2(D_2(T))=\beta_2(\D_2(T))=0$. For $n=3$, note that all subcomplexes in the configuration space diagram have trivial homology groups in the corresponding Mayer-Vietoris sequences. Therefore, $\beta_2(D_3(T))=\beta_2(\D_3(T))=0$ and the first nontrivial case is $n=4$.

\begin{theorem}\label{th:h2_tree2comp}
Let $T$ be a tree graph with two components $S$ and $S'$ (Fig.\ref{tree-components}). The rank of the second homology group for $n$ indistinguishable particles on $T$ is 
\begin{equation}\label{h2_tree2comp}
\beta_2^{(n)}(S,S')=\sum_{l=2}^{n-2}\left(\beta_1^{(l)}(S)-\beta_1^{(l-1)}(S)\right)\beta_1^{(n-l)}(S'),
\end{equation}
where $\beta_1^{(k)}(S)$ is the rank of the first homology group for $k$ particles on star graph $S$, given in equation (\ref{h1star_ind}). 
\end{theorem}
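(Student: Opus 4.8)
The plan is to run the Mayer--Vietoris sequences (\ref{m-v_h2tree}) for the nested filtration $X_0\supset X_1'\supset X_1\supset\dots\supset X_{n-1}$ from the inside outwards, and to reduce the computation of $\beta_2^{(n)}(S,S')=\mathrm{rank}\,H_2(X_0)$ to a recursion among the ranks of the split summands in (\ref{h2xk_tree}). Because a tree with two hubs admits at most two pairwise disjoint edges each carrying a hub, the Morse complex of $D_n(T)$ has dimension at most $2$, so $H_3(X_k)=H_3(X_k')=0$; this is exactly the leading $0$ in (\ref{m-v_h2tree}), which makes every $\Phi_k,\Phi_k'$ injective, and since tree configuration spaces have free homology each sequence splits as in (\ref{h2xk_tree}). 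Ranks are therefore additive along each short exact sequence, and with $h_k:=\mathrm{rank}\,H_2(X_k)$, $h_k':=\mathrm{rank}\,H_2(X_k')$ I obtain, using $B_k=X_{k+1}'$ and $B_k'=X_k$,
\[ h_k=\mathrm{rank}\,H_2(A_k)+h_{k+1}'-\mathrm{rank}\,H_2(A_k\cap B_k)+\mathrm{rank}\,\coim\delta_k, \]
\[ h_k'=\mathrm{rank}\,H_2(A_k')+h_k-\mathrm{rank}\,H_2(A_k'\cap B_k')+\mathrm{rank}\,\coim\delta_k', \]
the subtracted terms being the ranks of the injective maps $\Phi_k,\Phi_k'$.

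First I would evaluate every rank by the K\"unneth theorem, using the facts that $D_m(S),D_m(S')$ are homotopy equivalent to wedges of circles (Section~\ref{preliminaries}), so their homology vanishes above degree $1$, and that $D_m(\tilde S)\simeq D_m(S)$ and $D_m(\tilde S')\simeq D_m(S')$ for $m<n$ (Section~\ref{sec:tree}). Writing $a_m:=\beta_1^{(m)}(S)$ and $a_m':=\beta_1^{(m)}(S')$, only the cross term $H_1\otimes H_1$ survives in each relevant $H_2$, giving $\mathrm{rank}\,H_2(A_k)=\mathrm{rank}\,H_2(A_k')=\mathrm{rank}\,H_2(A_k'\cap B_k')=a_{n-k}a_k'$, whereas the extra fixed vertex in $A_k\cap B_k=D_{n-k-1}(\tilde S)\times D_k(\tilde S')\times v$ gives $\mathrm{rank}\,H_2(A_k\cap B_k)=a_{n-k-1}a_k'$. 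The ends of the filtration are pinned down by $H_2(A_0)=H_2(D_n(S))=0$ and $H_2(B_{n-1})=H_2(D_n(S'))=0$.

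The main obstacle is to prove $\coim\delta_k=\coim\delta_k'=0$ for all $k$, i.e. the $n$-particle analogue of Lemma \ref{lemma_partial_d2}. For this I would use the over-complete basis of $H_2$ supplied by the Morse-theoretic correspondence of Section~\ref{preliminaries}: every class is represented by a product $c\otimes c'$ of a $1$-cycle $c$ on $S$ carrying $l$ particles and a $1$-cycle $c'$ on $S'$ carrying $n-l$ particles. Since $T$ is a tree, each such $1$-cycle is a combination of $c_Y$-exchanges confined to a single star and hence moves no particle across the junction $v$; consequently $c\otimes c'$ has a fixed particle distribution and lies entirely inside one of the two subcomplexes of the relevant decomposition. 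By the description of $\delta$ given after (\ref{m-v_h2}) such a representative is sent to zero, so $\im\delta_k=\im\delta_k'=0$. The only delicate point, handled exactly as in Lemma \ref{lemma_partial_d2}, is a representative in which a free particle sits on an edge adjacent to $v$: disjointness of $c$ and $c'$ then assigns that particle unambiguously to one side.

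With $\coim\delta=0$ the second relation gives $h_k'=h_k$, and substituting into the first collapses it to the single recursion
\[ h_k=(a_{n-k}-a_{n-k-1})\,a_k'+h_{k+1},\qquad h_n:=0, \]
the convention being consistent with $H_2(B_{n-1})=0$. Unrolling from the bottom yields $h_0=\sum_{k=0}^{n-1}(a_{n-k}-a_{n-k-1})a_k'$, and the substitution $l=n-k$, together with the vanishing of the terms $l=1,n-1,n$ forced by $a_0=a_1=a_0'=a_1'=0$, produces exactly
\[ \beta_2^{(n)}(S,S')=h_0=\sum_{l=2}^{n-2}\bigl(\beta_1^{(l)}(S)-\beta_1^{(l-1)}(S)\bigr)\beta_1^{(n-l)}(S'), \]
which is (\ref{h2_tree2comp}). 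The subtracted factor $a_{n-k-1}a_k'=\mathrm{rank}\,\im\Phi_k$ comes from the transfer intersection $A_k\cap B_k$ and records the double-counting of cycles with one particle parked at $v$, matching the heuristic of Section~\ref{sec:tree}.
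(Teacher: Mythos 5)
Your proof is correct and follows essentially the same route as the paper's: the same filtration $X_0\supset X_1'\supset X_1\supset\dots\supset X_{n-1}$, injectivity of $\Phi_k$ deduced from $H_3(X_k)=0$ (no critical $3$-cells for a two-hub tree), vanishing of $\im\delta_k$ via the product structure of the over-complete basis of $2$-cycles, and K\"unneth together with the homotopy equivalences $D_m(\tilde S)\simeq D_m(S)$, $D_m(\tilde S')\simeq D_m(S')$ for $m<n$ to evaluate the ranks, followed by unrolling the resulting recursion. The only difference is cosmetic: you run the recursion directly at the level of ranks $h_k,h_k'$, whereas the paper first establishes the group-level isomorphism $H_2(B_{k-1})\cong H_2(B_k')\cong(\cdot)\oplus H_2(B_k)$ and then subtracts ranks.
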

\begin{proof}
Consider two Mayer-Vietoris sequences for two consecutive subcomplexes, $X_k$ and $X_k'$. We will obtain a recurrence relation for $H_2(B_k)$. Maps from the sequence for $X_k$ are
\begin{eqnarray*}
\Phi_k:\ H_2(D_{n-k-1}(\tilde S)\times D_k(\tilde S')\times v)\to H_2(D_{n-k}(S)\times D_k(\tilde S'))\oplus H_2(B_k), \\
\delta_k:\ H_2(X_k)\to H_1(D_{n-k-1}(\tilde S)\times D_k(\tilde S')\times v).
\end{eqnarray*}
Because $H_3(X_k)=0$, the Mayer-Vietoris sequence implies that map $\Phi_k$ is injective. Hence, $\im\Phi_k\cong H_2(A_k\cap B_k)$. Moreover, $\im\delta_k=0$. This is because each element of the over-complete basis of $2$-cycles is a chain, which is properly contained in $D_{n-k}(\tilde S)\times D_k(S')$ or $D_{n-k}(S)\times D_k(\tilde S')$ for some $k$. Therefore, we have
\[H_2(B_k')=\coker\Phi_k\cong \left(\left(H_1(D_{n-k}(S))\otimes H_1(D_k(\tilde S'))\right)\oplus H_2(B_k)\right)/\im\Phi_k.\]
The quotient can be realised as follows. Any element of $\coim\Phi_k$ can be written as a tensor product of chains of the following form
\[[c\otimes c']\times v:\ [c]\in H_1(D_{n-k-1}(\tilde S)),\ [c']\in H_1(D_k(\tilde S')).\]
Furthermore, each such $2$-cycle can be written as $2$-cycle $(c\times v)\otimes c'$, which belongs to $\mk{C}_2(A_k)$, or $2$-cycle $c\otimes (c'\times v)$, which belongs to $\mk{C}_2(B_k)$. Map $\Phi_k$ acts on the homology classes as
\[\Phi_k([c\otimes c']\times v)=([(c\times v)\otimes c'],[-c\otimes (c'\times v)]).\]
On the other hand, every element of $H_2(A_k)$ can be decomposed in the basis of the tensor product
\[[\tilde c\otimes \tilde c']:\ [\tilde c]\in H_1(D_{n-k}(S)),\ [\tilde c']\in H_1(D_k(\tilde S')).\]
By the injectivity of $\Phi_k$, cycles $(c_1\times v)\otimes c_1'$ and $(c_2\times v)\otimes c_2'$ represent different classes in $H_2(A_k)$ if $[c_1]\neq[c_2]$ or $[c_1']\neq[c_2']$. Therefore, from every element $[a]\in H_2(A_k)$ we can extract in a unique way the part, which belongs to $H_2(A_k\cap B_k)$, i.e.
\begin{equation}\label{a_decomp}
[a]=\sum_{[c],[c']} [(c\times v)\otimes c']+[\tilde a].
\end{equation}
Therefore, for a pair $([a],[b])\in H_2(A_k)\oplus H_2(B_k)$, where $[a]$ is decomposed, as in (\ref{a_decomp}), we have 
\[\left([a],[b]\right)\sim\Big([\tilde a],[b]+\sum_{[c],[c']} [(c\times v)\otimes c']\Big)\]
under the quotient by $\im\Phi_k$. Moreover, pairs, where $[a]=[\tilde a]$, yield different equivalence classes for different $a,b$. This means that the quotient by $\im\Phi_k$ can be realised by taking the quotient by $\coim\Phi_k$ only on $H_2(A_k)$. In other words,
\[H_2(B_{k'})\cong \frac{H_1(D_{n-k}(S))\otimes H_1(D_k(\tilde S'))}{H_1(D_{n-k-1}(\tilde S))\otimes H_1(D_k(\tilde S'))}\oplus H_2(B_k).\]
A similar result holds for pair $(A_k',B_k')$, i.e. $\Phi_k'$ is injective and $\im\delta_k'=0$. The quotient by $\Phi_k'$ reads
\[H_2(B_{k-1})\cong \frac{H_1(D_{n-k}(\tilde S))\otimes H_1(D_k(S'))}{H_1(D_{n-k}(\tilde S))\otimes H_1(D_k(\tilde S'))}\oplus H_2(B_k')\cong H_2(B_k').\]
Subtracting ranks in the equation for $H_2(B_k')$, we obtain a recurrence relation for ranks of $H_2(B_k)$ and $H_2(B_{k-1})$
\begin{equation}\label{recurrence_2star}
\beta_2(B_{k-1})=\left(\beta_1^{(k)}(S)-\beta_1^{(k-1)}(S)\right)\beta_1^{(n-k)}(S')+\beta_2(B_k).
\end{equation}
The initial condition is $\beta_2(B_{n-1})=0$.
\end{proof}
\noindent The following theorem shows how to compute the second homology group of any tree.
\begin{theorem}\label{thm:h2_tree}
Let $T$ be a tree graph and let a pair $(S,S')$ denote the subgraph of $T$, which consists of two star graphs $S$ and $S'$ and the unique path in $T$ that connects the essential vertices of $S$ and $S'$ (see Fig.\ref{fig:h2_tree}a). Then,
\[H_2(D_n(T))\cong\bigoplus_{(S,S')\subset T}H_2(D_n((S,S'))).\]
\end{theorem}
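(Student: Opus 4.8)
The plan is to prove the isomorphism by induction on the number $m$ of star subgraphs of $T$, detaching one pendant star at each step so as to reduce to the two-component Theorem \ref{th:h2_tree2comp}. For $m\le 1$ both sides vanish: $D_n$ of a single star is homotopy equivalent to a wedge of circles, so $H_2=0$, and there are no pairs $(S,S')$. For $m=2$ the statement is precisely Theorem \ref{th:h2_tree2comp}, since then $T=(S,S')$ and the index set of the direct sum is the single pair. For the inductive step I would pick a star $S$ that is a leaf of the underlying tree of $T$, so that $T$ is the wedge of $S$ with the subtree $T'$ (carrying $m-1$ stars) along the connecting path, and run the same Mayer--Vietoris chain $X_0\supset X_1'\supset X_1\supset\dots$ as in the proof of Theorem \ref{th:h2_tree2comp}, now with $T'$ in the role previously played by the second star $S'$. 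The aim of the step is the splitting
\[H_2(D_n(T))\cong H_2(D_n(T'))\oplus\bigoplus_{S'\subset T'}H_2(D_n((S,S'))).\]

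The key structural difference from the two-component case is that Theorem \ref{th:h2_tree2comp} relied on $H_2(D_k(S'))=0$ for stars, whereas now $H_2(D_k(\tilde T'))$ is in general nonzero. Consequently each node $D_{n-k}(S)\times D_k(\tilde T')$ of the chain has, by the K\"unneth theorem and $H_2(D_\cdot(S))=0$, a second homology that splits into a \emph{cross} part $H_1(D_{n-k}(S))\otimes H_1(D_k(\tilde T'))$ and a \emph{genuine} part $H_0(D_{n-k}(S))\otimes H_2(D_k(\tilde T'))\cong H_2(D_k(\tilde T'))$. I would treat the cross part exactly as in Theorem \ref{th:h2_tree2comp}: each $\Phi_k$ is injective because $H_3(X_k)=0$, and $\im\delta_k=0$ because every element of the over-complete basis of $H_2(D_n(T))$ is a product $c_{Y_1}\otimes c_{Y_2}$ of two disjoint $Y$-exchange $1$-cycles with the remaining particles parked on vertices, and such a cycle has a fixed distribution of particles among the components, hence lies in a single node of the configuration-space diagram and is annihilated by $\delta_k$ (the tree analogue of Lemma \ref{lemma_partial_d2}). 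Summing the resulting cokernels, the cross contribution is $\sum_{l}\bigl(\beta_1^{(l)}(S)-\beta_1^{(l-1)}(S)\bigr)\beta_1^{(n-l)}(T')$, and using the additivity $\beta_1^{(k)}(T')=\sum_{S'\subset T'}\beta_1^{(k)}(S')$ this equals $\sum_{S'\subset T'}\beta_2^{(n)}(S,S')$, the rank of $\bigoplus_{S'\subset T'}H_2(D_n((S,S')))$. Since all groups are free, matching the explicit product generators identifies the cross summand with $\bigoplus_{S'\subset T'}H_2(D_n((S,S')))$ on the nose.

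With the step established, I would finish by the induction. The pairs of stars of $T$ split into the pairs contained in $T'$ and the pairs $(S,S')$ with $S'\subset T'$; applying the inductive hypothesis $H_2(D_n(T'))\cong\bigoplus_{(S',S'')\subset T'}H_2(D_n((S',S'')))$ to the first summand of the step and leaving the cross summand unchanged yields exactly $\bigoplus_{(S,S')\subset T}H_2(D_n((S,S')))$, completing the proof.

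The main obstacle I anticipate is controlling the genuine part. Unlike the two-particle situation of Theorem \ref{h2d2}, where a single term $H_2(D_2(\Gamma_2))$ simply survived all the quotients, here the intermediate nodes $D_{n-k}(S)\times D_k(\tilde T')$ with $k<n$ each carry a copy of $H_2(D_k(\tilde T'))$, namely a genuine $T'$-cycle with some particles parked on $S$, so I must show these copies telescope under the $\Phi_k$ to a \emph{single} copy of $H_2(D_n(T'))$ rather than contributing new homology. The mechanism is that the particle-moving connections identify, through $\Phi_k(x)=(x,-x)$, a genuine cycle with a parked particle sitting at the junction vertex $v$ with the same cycle after that particle is pushed across $v$; iterating lets any particle parked on $S$ be slid into $T'$ without meeting the two active $Y$-exchanges (which live on stars of $T'$, with room available in the sufficiently subdivided tree). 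Verifying rigorously that this telescoping leaves precisely $H_2(D_n(T'))$, and that it does not interfere with the cross summand, is the delicate bookkeeping at the heart of the step; the freeness of all homology groups together with the injectivity of the $\Phi_k$ are what make the rank count decisive once the two summands have been shown to be independent.
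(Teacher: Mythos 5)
Your proposal takes a genuinely different route from the paper. The paper does not use Mayer--Vietoris for this theorem at all: it argues directly that every element of the over-complete basis of $H_2(D_n(T))$ (a product of two disjoint $Y$-exchange cycles with the remaining particles parked on vertices) is homologous, within $D_n(T)$, to a $2$-cycle supported in $D_n((S,S'))$ for the pair of stars carrying the two exchanges. This is done purely geometrically: delete sufficiently subdivided copies of $S$ and $S'$ from $T$, root each resulting component of the complement at its vertex nearest the original root, and slide all parked particles to those roots, i.e.\ into $S$ and $S'$. The quantitative content is then delegated to Theorem \ref{th:h2_tree2comp} applied pair by pair. Your leaf-detaching induction instead re-derives the splitting through the recursion of Theorem \ref{th:h2_tree2comp}, much as the paper itself does in Theorem \ref{th:hk_treekcomp}; and your treatment of the ``genuine'' summands is, in outline, correct: the genuine part of $H_2(A_k\cap B_k)$ maps isomorphically onto the genuine part of $H_2(A_k)$ (both $D_{n-k-1}(\tilde S)$ and $D_{n-k}(S)$ are connected, so the parking positions are homologically irrelevant), hence the quotient by $\im\Phi_k$ absorbs each such copy into $H_2(B_k)$ and only the cross quotients accumulate along the recursion.

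However, there is a genuine gap: your justification that each $\Phi_k$ is injective ``because $H_3(X_k)=0$'' fails in exactly the cases the induction is designed to handle. Once $T$ has $m\geq3$ essential vertices, $H_3(D_n(T))\neq0$ in general (for $n\geq6$ there are products of three disjoint $Y$-exchange cycles; this is the content of Theorem \ref{th:hk_treekcomp} for $m=3$), and the intermediate subcomplexes $X_k$ inherit such nonvanishing $H_3$. The paper may assume the vanishing of the next homology group in Theorem \ref{th:hk_treekcomp} only because there it computes the \emph{top} homology of a tree with exactly $m$ stars; you are computing $H_2$ of a tree with arbitrarily many stars, so your Mayer--Vietoris sequences begin with a potentially nonzero $H_3(X_k)$, and injectivity of $\Phi_k$ is equivalent to the vanishing of $\im\left(\delta\colon H_3(X_k)\to H_2(A_k\cap B_k)\right)$, which must be proved rather than assumed. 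The natural repair is the containment argument you already use for $\delta_2$, but one dimension up: every generator of $H_3(X_k)$ is homologous to a product of three disjoint exchanges with parked particles, hence lies in a single node of the configuration space diagram and is killed by $\delta$. This requires invoking the Morse-theoretic over-complete basis for $H_3$, not just for $H_2$ --- a missing ingredient, not a cosmetic omission. With that supplied, the telescoping written out, and the final rank identity converted into the stated isomorphism via freeness of all the groups involved, your induction would go through, though at considerably greater length than the paper's direct particle-sliding argument.
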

\begin{proof}
The strategy for the proof is to show that every cycle from the over-complete basis of $H_2(D_n(T))$ is homologically equivalent to a $2$-cycle from $D_n(S,S')$ for a pair of star subgraphs of $T$. Assume first that every star subgraph of $T$ is sufficiently subdivided for $n$ particles. This in particular means that the edges that connect essential vertices are subdivided twice as much as the condition of sufficient subdivision requires. Every $2$-cycle from the over-complete basis of $H_2(D_n(T))$ is isomorphic to a tensor product of two chains, each describing exchange of particles a $Y$-subgraph of $T$, and the remaining $n-4$ particles distributed on free vertices of $T$. Let $S$ and $S'$ be the star-subgraphs of $T$ that contain the two $Y$-subgraphs, where the particles exchange (Fig.\ref{fig:h2_tree}a).
 \begin{figure}[ht]
  
\includegraphics[width=0.8\textwidth]{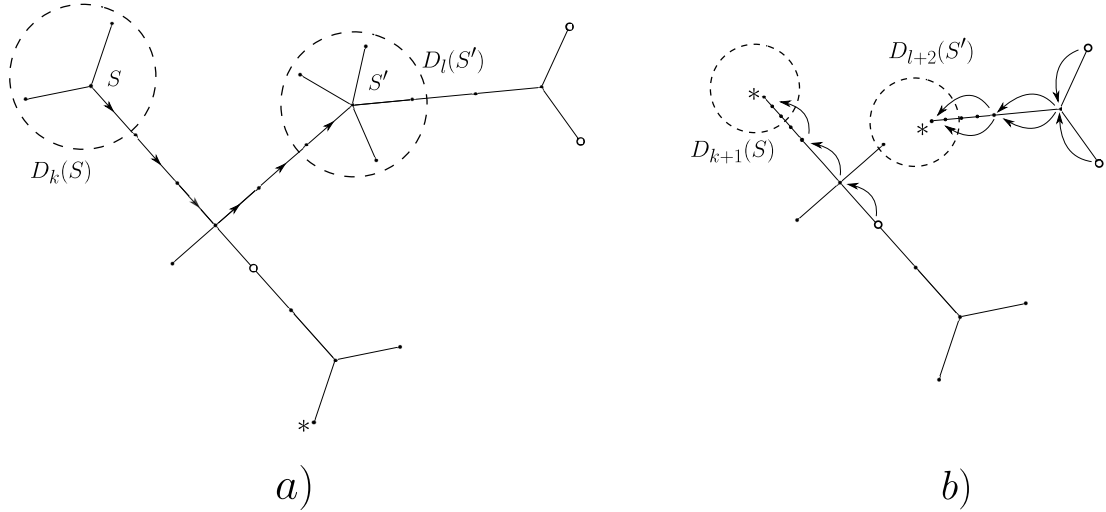}
\caption{Illustration for the proof of Theorem \ref{thm:h2_tree}. White vertices denote vertices that are occupied by particles from the outside of $D_k(S)$ and $D_l(S')$. Figure b) shows the construction of the path that brings each particle to a configuration space of one of the star subgraphs, where the particles exchange.}
\label{fig:h2_tree}
\end{figure}
The remaining particles are distributed on the remaining vertices of $T$. Some of them may occupy free vertices of $S$ or $S'$. Assume that $k-2$ out of free particles occupy star graph $S$ and $l-2$ free particles occupy $S'$. The remaining $n-(k+l)$ particles are distributed outside $S$ and $S'$. The element of the over-complete basis of $H_2(D_n(T))$ that corresponds to such a situation is of the form
\[\sigma=(c\otimes c')\times\{v_1,\dots,v_{n-k-l}\},\ [c]\in H_1(D_k(S)),\ [c']\in H_1(D_l(S')),\ \{v_1,\dots,v_{n-k-l}\}\notin S,S'.\] 
We will next give a construction of a path in $D_{n}(T)$ that connects point $\{v_1,\dots,v_{n-k-l}\}$ with a point, where all the particles are distributed on star graphs $S$ and $S'$. To this end, remove from $T$ subgraphs $S$ and $S'$ by removing star subgraphs that are sufficiently subdivided for $k$ and $l$ particles respectively. After removing the star subgraphs, graph $T$ decomposes into a number of connected components (see Fig.\ref{fig:h2_tree}b). Each component has a number of vertices of valence one, where the star graphs were attached. Order these vertices according to their distance from the root in $T$. For each component, choose the new root to be the vertex, which was the closest one to the original root in $T$ (Fig.\ref{fig:h2_tree}b). Finally, move all particles in the components to the roots. The resulting configuration is a configuration, where all particles are distributed on star graphs $S$ and $S'$.
\end{proof}
\noindent As a consequence, the rank of the second homology group for $n$ particles reads
\begin{equation}\label{h2_tree}
\beta_2^{(n)}(T)=\sum_{S,S'\subset T}\beta_2^{(n)}(S,S').
\end{equation}
An analogical result holds for all the higher homology groups. For the $m$th homology one has to take the sum over all tree subgraphs of $T$ that contain exactly $m$ star subgraphs, i.e.
\[H_m(D_n(T))\cong\bigoplus_{T'\subset T:\ \#T'=m}H_m(D_n(T')).\]
The proof is the same as the proof of theorem \ref{thm:h2_tree} -- for every $m$-cycle from the over-complete basis decompose $T$ by removing the star graphs that belong to $T'$ and move the remaining particles within the components.

Hence, the problem of computing $H_m(D_n(T))$ for any tree boils down to the problem of computing the $m$th homology for a tree containing $m$ essential vertices. To this end, we consider a bipartition $(S,T')$ of such a tree, where $S$ is one of the star graphs of valence $1$ in the sense of the scheme of connections in the tree (see Fig.\ref{tree}b), and $T'$ is the tree with graph $S$ removed.
\begin{theorem}\label{th:hk_treekcomp}
Let $T'$ be a tree graph with $m-1$ essential vertices. Construct a tree graph $T$ with $m$ essential vertices as a wedge sum of $T'$ and a star graph $S$, i.e. $T=(T'\sqcup S)/\sim$, where gluing map $\sim$ identifies two vertices of valence $1$ in $T'$ and $S$. The rank of the $m$th homology group for $n$ indistinguishable particles on $T$ is 
\begin{equation}\label{hk_treekcomp}
\beta_m^{(n)}(S,T')=\sum_{l=2}^{n-2}\left(\beta_1^{(l)}(S)-\beta_1^{(l-1)}(S)\right)\beta_{m-1}^{(n-l)}(T').
\end{equation}
\end{theorem}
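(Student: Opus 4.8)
The plan is to reproduce the proof of Theorem~\ref{th:h2_tree2comp} almost verbatim, raising the homological degree by one and replacing the second star $S'$ by the tree $T'$. First I would decompose $D_n(T)$ part-by-part exactly as in Section~\ref{sec:tree}, taking $A_0=D_n(S)$ and peeling off the columns of the configuration space diagram, so that the pairs of the first kind are $A_k'=D_{n-k}(\tilde S)\times D_k(T')$ and $B_k'=(D_{n-k}(S)\times D_k(\tilde T'))\cup\dots\cup D_n(T')$ with $A_k'\cap B_k'=D_{n-k}(\tilde S)\times D_k(\tilde T')$, while the pairs of the second kind are $A_k=D_{n-k}(S)\times D_k(\tilde T')$ and $B_k=(D_{n-k-1}(\tilde S)\times D_{k+1}(T'))\cup\dots\cup D_n(T')$ with $A_k\cap B_k=D_{n-k-1}(\tilde S)\times D_k(\tilde T')\times v$. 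For each pair I would write the Mayer--Vietoris sequence (\ref{m-v_h2tree}) in degree $m$ and turn it into a one-step recurrence for $\beta_m(B_k)$.

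The homological input that makes the sequences split is that $H_{m+1}$ vanishes on every node, since there are no $(m+1)$-dimensional critical cells in the relevant Morse complex: the top homological degree is attained only when every essential vertex hosts an exchange, and a product node $D_{n-k}(S)\times D_k(\tilde T')$ has, by the K\"unneth theorem, homology concentrated in degrees $\le m$ (the star factor contributes in degrees $0,1$, the $T'$ factor in degrees $\le m-1$ as $T'$ has $m-1$ essential vertices). Hence $H_{m+1}(X_k)=0$, each $\Phi_k$ is injective, and, homology being free for trees, the sequence splits as in (\ref{h2xk_tree}). The same K\"unneth computation isolates the only surviving summand of $H_m(A_k)$: because $H_i(D_{n-k}(S))=0$ for $i\ge 2$ and $H_m(D_k(\tilde T'))=0$, one obtains $H_m(A_k)\cong H_1(D_{n-k}(S))\otimes H_{m-1}(D_k(\tilde T'))$, the exact analogue of the factor used in the $m=2$ case.

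The main obstacle, and the only genuinely new point, is establishing $\im\delta_k=\im\delta_k'=0$. Here I would invoke the structure of the over-complete basis of $H_m(D_n(T))$: every generator is a product $c_{Y_1}\otimes\dots\otimes c_{Y_m}\otimes\{v_1,\dots,v_{n-2m}\}$ of exchanges on $m$ distinct star subgraphs. Since $S$ carries a single essential vertex and $H_{\ge 2}(D(S))=0$, at most one of these $Y$-exchanges can lie inside $S$; consequently each basis $m$-cycle factors as a $1$-cycle from $D_l(S)$ tensored with an $(m-1)$-cycle from $D_{n-l}(T')$, and is therefore properly contained in a single product node $D_{n-k}(\tilde S)\times D_k(T')$ or $D_{n-k}(S)\times D_k(\tilde T')$. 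Such a cycle has empty image under the connecting map, exactly as in Lemma~\ref{lemma_partial_d2}, so $\coim\delta_k=\coim\delta_k'=0$.

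Once this is in place, the realization of the quotient by $\im\Phi_k$ on $H_m(A_k)$ alone carries over word for word, collapsing the two steps to
\[
H_m(B_k')\cong \frac{H_1(D_{n-k}(S))\otimes H_{m-1}(D_k(\tilde T'))}{H_1(D_{n-k-1}(\tilde S))\otimes H_{m-1}(D_k(\tilde T'))}\oplus H_m(B_k),\qquad H_m(B_{k-1})\cong H_m(B_k'),
\]
the second isomorphism following from the homotopy equivalence $D_k(\tilde S)\simeq D_k(S)$ for $k<n$. Taking ranks and using $H_{m-1}(D_k(\tilde T'))\cong H_{m-1}(D_k(T'))$ yields the recurrence
\[
\beta_m(B_{k-1})=\left(\beta_1^{(k)}(S)-\beta_1^{(k-1)}(S)\right)\beta_{m-1}^{(n-k)}(T')+\beta_m(B_k),
\]
with initial condition $\beta_m(B_{n-1})=\beta_m(D_n(T'))=0$, the latter because $T'$ has only $m-1$ essential vertices. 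Since $A_0=D_n(S)$ and $A_0\cap B_0$ both have trivial $H_m$ for $m\ge2$, the $i=0$ sequence gives $\beta_m^{(n)}(S,T')=\beta_m(B_0)$, and telescoping the recurrence over the effective range $2\le l\le n-2$ (the boundary terms vanishing because $\beta_1^{(1)}(S)=\beta_1^{(0)}(S)=0$ and $\beta_{m-1}^{(j)}(T')=0$ for $j<2(m-1)$) produces formula (\ref{hk_treekcomp}).
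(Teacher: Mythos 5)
Your proposal is correct and follows essentially the same route as the paper's own proof: the identical part-by-part decomposition into pairs $(A_k,B_k)$, $(A_k',B_k')$, the degree-$m$ Mayer--Vietoris sequences with $\Phi_k$ injective and $\im\delta_k=\im\delta_k'=0$ deduced from the product structure of the over-complete basis, realization of the quotient by $\im\Phi_k$ on $H_m(A_k)$ alone, and the resulting telescoping recurrence with initial condition $\beta_m(B_{n-1})=0$. The only differences are cosmetic: you justify $H_{m+1}(X_k)=0$ and the reduction $H_m(A_k)\cong H_1(D_{n-k}(S))\otimes H_{m-1}(D_k(\tilde T'))$ explicitly via the K\"unneth theorem (the paper invokes the dimension of the Morse complex), and your index convention in the recurrence is shifted by $k\mapsto n-k$ relative to the paper's, which does not affect the summed formula.
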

\begin{proof}
Note first that for such a tree graph, we have $H_{m+1}(D_n(T))=0$, because the dimension of the corresponding Morse complex is $m$. Consider $T$ as a $1$-connected graph with two components, where the components are $S$ and $T'$. Vertex $v$ connecting the components has valence $2$. Next, construct the sequence of subcomplexes 
\[D_n(T)=X_0\supset X_1'\supset X_1\supset\dots\supset X_{n-1}=\left(S\times D_{n-1}(\tilde T')\right)\cup D_n(T'),\]
as in the case of two star graphs. The Mayer-Vietoris sequence for each subcomplex reads
\begin{equation*}
0\rightarrow H_m(A_k\cap B_k)\xrightarrow{\Phi_k}H_m(A_k)\oplus H_m(B_k)\xrightarrow{\Psi_k}H_m(X_k)\xrightarrow{\delta_k}H_{m-1}(A_k\cap B_k)\xrightarrow{}\dots
\end{equation*}
The above sequence splits and we have
\begin{equation*}
H_m(X_k)=\coker(\Phi_k)\oplus\coim\delta_k,\ H_m(X'_k)=\coker(\Phi'_k)\oplus\coim\delta'_k.
\end{equation*}
Consider two Mayer-Vietoris sequences for two consecutive subcomplexes, $X_k$ and $X_k'$. We will obtain a recurrence relation for $H_m(B_k)$, as in the proof of theorem \ref{th:h2_tree2comp}. Maps from the sequence for $X_k$ are
\begin{eqnarray*}
\Phi_k:\ H_m(D_{n-k-1}(\tilde S)\times D_k(\tilde T')\times v)\to H_m(D_{n-k}(S)\times D_k(\tilde T'))\oplus H_m(B_k), \\
\delta_k:\ H_m(X_k)\to H_{m-1}(D_{n-k-1}(\tilde S)\times D_k(\tilde T')\times v).
\end{eqnarray*}
The corresponding maps for $X_k'$ read
\begin{eqnarray*}
\Phi_k':\ H_m(D_{n-k}(\tilde S)\times D_k(\tilde T'))\to H_m(D_{n-k}(\tilde S)\times D_k(T'))\oplus H_m(B'_k), \\
\delta_k':\ H_m(X'_k)\to H_{m-1}(D_{n-k}(\tilde S)\times D_k(\tilde T')).
\end{eqnarray*}
Again, from the construction of the over-complete basis, every $m$-cycle from the basis is contained in $A_k$ or $B_k$, hence $\im\delta_k=0$ and $\im\delta_k'=0$ for all $k$. Hence, the homology groups of the subcomplexes are
\begin{eqnarray*}
H_m(B_k')\cong \left((H_1(D_{n-k}(S))\otimes H_{m-1}(D_k(\tilde T')))\oplus H_m(B_k)\right)/\im\Phi_k, \\
H_m(B_{k-1})\cong \left((H_1(D_{n-k}(\tilde S))\otimes H_{m-1}(D_k(T')))\oplus H_m(B_k')\right)/\im\Phi'_k.
\end{eqnarray*}
As in the proof of theorem \ref{th:h2_tree2comp}, the above quotients can be realised by taking the quotient by $\coim\Phi_k$ and $\coim\Phi_k'$ on $H_m(A_k)$ and $H_m(A_k')$ respectively. By doing so, we get $H_m(B_{k-1})\cong H_m(B_k')$ and 
\[H_m(B_k')\cong H_m(B_k')\oplus\left(H_1(D_{n-k}(S))\otimes H_{m-1}(D_k(\tilde T'))\right)/\left(H_1(D_{n-k-1}(\tilde S))\otimes H_{m-1}(D_k(\tilde T'))\right).\]
This gives us the following recursive equation for $\beta_m(B_k)$
\[\beta_m(B_{k-1})=\beta_m(B_k)+\left(\beta_1^{(n-k)}(S)-\beta_1^{(n-k-1)}(S)\right)\beta_{m-1}^{(k)}(T')\]
with the initial condition $\beta_m(B_{n-1})=0$. The solution is equation $(\ref{hk_treekcomp})$.
\end{proof}
\noindent Note that equation $(\ref{hk_treekcomp})$ allows one to express $\beta_m^{(n)}(T)$ for a tree with $m$ essential vertices by the ranks of the first homology groups for different numbers of particles on the star subgraphs contained in $T$. To this end, one has to apply equation $(\ref{hk_treekcomp})$ recursively, until all star subgraphs of $T$ are removed. One can check by a straightforward calculation that the solution to such a recursion with the initial condition $\beta_1^{(n)}(T)=\beta_1^{(n)}(S)$ is 
\begin{equation}\label{eq:hm_tree}
\beta_m^{(n)}(T)=\sum_{i=0}^{m-1}(-1)^i {{m-1}\choose i}\sum_{l_1+\dots+l_m=n-i,l_j\geq2}\beta_1^{(l_1)}(S^{(1)})\beta_1^{(l_2)}(S^{(2)})\dots\beta_1^{(l_m)}(S^{(m)}),
\end{equation}
where $\{S^{(j)}\}_{j=1}^m$ is the set of all star subgraphs from $T$.

\section{Two graphs connected by a single edge}\label{sec:one_edge}
In this section we continue the line of thought from the previous sections and show how to compute the second homology group for $n$ particles on a graph, which consists of two arbitrary graphs, that are connected by a single edge. In other words, vertex $v$ on figure \ref{components} has valence $1$. The results of this section can be viewed as another generalisation of the formula for the second homology group for two particles on such a graph from \cite{BF2}. As in previous sections, we consider the the Mayer-Vietoris sequence for pairs $(A_k,B_k)$, $(A_k',B_k')$ from the configuration space diagram, where 
\begin{equation}\label{eq:single_edge_decomp}
D_n(\Gamma_1)=X_0\supset X_1'\supset X_1\supset\dots\supset X_{n-1}=\left(\Gamma_1\times D_{n-1}(\tilde \Gamma_2)\right)\cup D_n(\Gamma_2).
\end{equation}
The sequences are of the form
\begin{equation}\label{m-v_h2arbitrary}
\dots \rightarrow H_3(X_k)\xrightarrow{\delta} H_2(A_k\cap B_k)\xrightarrow{\Phi_k}H_2(A_k)\oplus H_2(B_k)\xrightarrow{\Psi_k}H_2(X_k)\xrightarrow{\delta_k}H_1(A_k\cap B_k)\xrightarrow{}\dots
\end{equation}
There are two major differences in comparison to the previously considered settings. First of all, the third homology group of the subcomplexes does not vanish, hence we do not have any {\it a priori} knowledge about the kernel of $\Phi_k$. Therefore, we shall conjecture that $\im\delta=0$.
\begin{conj}\label{delta_conj}
Consider $D_n(\Gamma)$ for $\Gamma$ consisting of two arbitrary graphs connected by a single edge. Let $X_k$ be any subcomplex in the decomposition of $D_n(\Gamma)$ (see equation (\ref{eq:single_edge_decomp})). We conjecture that the boundary map from the Mayer-Vietoris sequence for $X_k$
\[\delta:\ H_m(X_k)\rightarrow H_{m-1}(A_k\cap B_k),\ m\geq 2.\]
has a trivial image.
\end{conj}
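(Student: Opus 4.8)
\emph{Reformulation.} My first move is to rephrase the conjecture homologically. By exactness of the sequence (\ref{m-v_h2arbitrary}) at the node $H_m(X_k)$ we have $\im\Psi_k=\ker\delta_k$, so the assertion $\im\delta_k=0$ is \emph{equivalent} to surjectivity of
\[\Psi_k:\ H_m(A_k)\oplus H_m(B_k)\to H_m(X_k),\qquad \Psi_k([x],[y])=[x+y].\]
Concretely, this demands that every $m$-cycle in $X_k$ be homologous to a sum $x+y$ of a cycle $x$ supported entirely in $A_k$ and a cycle $y$ supported entirely in $B_k$; that is, no homology class genuinely straddles the connecting vertex $v$. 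The plan is therefore to produce, for each class in $H_m(X_k)$, a one-sided representative.

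\emph{The over-complete basis approach.} I would first imitate the strategy of Lemma \ref{lemma_partial_d2} and Theorem \ref{th:h2_tree2comp}: exhibit an over-complete generating set of $H_m(X_k)$ and verify that each generator already lies wholly in $A_k$ or wholly in $B_k$, hence in $\ker\delta_k$. The decisive input is that $\Gamma_1$ and $\Gamma_2$ meet along a single edge, so that edge is a bridge; consequently every $1$-cycle of $\Gamma$ splits as a sum of $1$-cycles each contained in $\Gamma_1$ or in $\Gamma_2$, none traversing the bridge. For the toric part of $H_m$, a generator has the form $c_1\otimes\dots\otimes c_m\otimes\{v_1,\dots,v_{n-m}\}$ with the $c_i$ pairwise disjoint; expanding each $c_i$ into its $\Gamma_1$- and $\Gamma_2$-parts and invoking disjointness rewrites the product as a combination of products whose factors each live on a single side, while the path-pushing construction of Theorem \ref{thm:h2_tree} moves the free vertices $\{v_1,\dots\}$ to one side. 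Each resulting generator is then supported in a single $A_k$ or $B_k$ and maps to $0$ under $\delta_k$.

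\emph{Non-product generators and the main obstacle.} For $m=2$ one must also treat the surface generators $\D_2(K_{3,3})$ and $\D_2(K_5)$ admitted by Theorem \ref{farber}; since $K_{3,3}$ and $K_5$ are $2$-connected they cannot contain a bridge, so each such surface sits inside $\Gamma_1$ or $\Gamma_2$ and is again one-sided. The genuine difficulty, and the reason the statement is only conjectured, is that for arbitrary graphs and $m\geq 3$, or for large particle numbers, there is no known over-complete basis of $H_m(D_n(\Gamma))$: the group may carry torsion and exotic non-product cycles, and the vanishing $H_3(X_k)=0$ that powered the tree arguments fails here, so $\Phi_k$ need not be injective and $\ker\Phi_k$ is uncontrolled. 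The crux is thus to show that \emph{every} class, not merely the product and surface ones, has a one-sided representative.

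\emph{A basis-free route.} To bypass the missing classification I would try to build a bridge-pushing deformation directly on $D_n(\Gamma)$. Since deleting the open bridge edge disconnects $\Gamma$, a particle occupying $v$ can be flowed into whichever component has a free vertex available; realising this as a discrete gradient vector field in the sense of Forman's theory used in this paper yields a flow that collapses the cells of $A_k\cap B_k$ and carries cycles to cycles. If such a flow deforms every cycle of $X_k$ into a sum of one-sided cycles while changing it only by a boundary, surjectivity of $\Psi_k$, hence $\im\delta_k=0$, follows uniformly in $m$, $n$, and $k$. The hardest step is to make this flow simultaneously well-defined and cycle-preserving across all distributions of particles between $\Gamma_1$ and $\Gamma_2$; establishing that the bridge imposes no homological obstruction for the full, and presently unknown, generating set of $H_m(D_n(\Gamma))$ is exactly the content that the conjecture leaves open.
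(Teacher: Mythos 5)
You are attempting to prove what the paper states only as Conjecture \ref{delta_conj}; the paper contains no proof of it, so there is nothing to compare your argument against except the paper's supporting evidence in sections \ref{sec:one_edge} and \ref{sec:partial}. Your proposal in fact reproduces that evidence accurately: the exactness reformulation ($\im\delta_k=0$ iff $\Psi_k$ is surjective) is correct; since the connecting edge is a bridge it lies on no cycle of $\Gamma$, so every toric generator $c_1\otimes\dots\otimes c_m\otimes\{v_1,\dots,v_{n-m}\}$ decomposes into products supported wholly in $A_k$ or wholly in $B_k$ and hence lies in $\ker\delta_k$ --- this is precisely the paper's remark that $\delta$ kills the toric part of the over-complete basis; and your observation that the $K_{3,3}$/$K_5$ surface classes are bridgeless, hence one-sided, is the same argument the paper uses in the corollary of section \ref{sec:two_particles} (though note those classes occur only for two \emph{distinguishable} particles, so they are peripheral to the conjecture, which concerns $D_n(\Gamma)$).

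However, your proposal is not a proof, as you yourself acknowledge, and the gap you name is exactly the one that keeps the statement a conjecture: for arbitrary $\Gamma_1,\Gamma_2$, $n$ particles and $m\geq2$ there is no known generating set of $H_m(D_n(\Gamma))$ extending Theorem \ref{farber}, so the strategy ``check each generator is one-sided'' cannot be carried out for the unknown non-product (possibly torsion) classes; and since $H_{m+1}(X_k)\neq0$ in general, exactness gives no control of $\ker\Phi_k$ either, unlike in the tree case. Your proposed ``basis-free route'' --- a bridge-pushing discrete gradient field whose flow deforms every cycle of $X_k$, up to boundaries, into a sum of chains in $\mk{C}_*(A_k)$ and $\mk{C}_*(B_k)$ --- is a reasonable programme, but it is stated without a construction and without proof of the two properties it needs (that it is a genuine Forman field on all of $X_k$, and that its flow is cycle-preserving across all particle distributions). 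Any such construction must moreover use the bridge hypothesis in an essential way: the examples of section \ref{sec:partial} show that $\im\delta\neq0$ can occur as soon as the cut vertex $v$ has valence $\geq3$ (the cycle $c\otimes c'\otimes c_Y$ with $c_Y$ centered at $v$) or the particles are distinguishable (the cycle $c_{Y_1,Y_2}$ on the double-$Y$ graph), so a flow argument indifferent to the valence of $v$ or to distinguishability would prove too much and be false. In short, your partial arguments are sound and coincide with the paper's evidence, but the conjecture remains open both in the paper and in your proposal.
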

\noindent Map $\delta$ maps the toric part of the over-complete basis of $H_m(D_n(\Gamma))$ to zero. Therefore, the nontrivial elements of $\im\delta$ must come from some more exotic elements of $H_m(D_n(\Gamma))$. Yet, we have not found any numerical evidence for the existence of an  element of $H_m(D_n(\Gamma))$ represented by an $m$-chain that involves a one-connected subgraph with $v$ of valence two. Therefore, we conjecture that for indistinguishable particles on one-connected graphs from such a class, we always have $\im\delta=0$. For a longer discussion with examples of graphs, where $\im\delta\neq0$, see section \ref{sec:partial}. In particular, we show that $\im\delta\neq0$ already for distinguishable particles on tree graphs.

Secondly, the homology groups are in general not free groups. Therefore, the Mayer-Vietoris sequence does not split. However, we will use our conjecture that $\im\delta=0$, which immediately gives $H_2(X_k)\cong\coker\Phi_k$ and $\ker\Phi_k=0$.
\begin{theorem}\label{th:h2_2comp}
Let $\Gamma$ be a wedge sum of $\Gamma_1$ and $\Gamma_2$, where gluing map identifies two vertices of valence $1$ in $\Gamma_1$ and $\Gamma_2$. Moreover, assume that maps $\delta$ and $\delta'$ from the corresponding Mayer-Vietoris sequences for $X_k$ and $X_k'$ satisfy conjecture \ref{delta_conj}. Then, the rank of the second homology group for $n$ indistinguishable particles on $\Gamma$ is 
\begin{equation}\label{h2_2comp}
\beta_2^{(n)}(\Gamma)=\beta_2^{(n)}(\Gamma_2)+\beta_2^{(n)}(\Gamma_1)+\sum_{k=1}^{n}\left(\beta_1^{(k)}(\Gamma_1)-\beta_1^{(k-1)}(\Gamma_1)\right)\beta_{1}^{(n-k)}(\Gamma_2).
\end{equation}
\end{theorem}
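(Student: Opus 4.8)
The plan is to run the same recursive Mayer--Vietoris argument as in Theorem \ref{th:h2_tree2comp}, but with the two inputs that there came from freeness (the vanishing $H_3(X_k)=0$ and the splitting of the sequence) replaced by the single hypothesis of Conjecture \ref{delta_conj}. Concretely, I would write the sequence (\ref{m-v_h2arbitrary}) for every consecutive pair $(A_k,B_k)$ and $(A_k',B_k')$ of the decomposition (\ref{eq:single_edge_decomp}) and extract two facts from $\im\delta=0$. Exactness at $H_2(A_k\cap B_k)$, together with $\im(\delta\colon H_3(X_k)\to H_2(A_k\cap B_k))=0$, forces $\Phi_k$ (and likewise $\Phi_k'$) to be injective; exactness at $H_2(X_k)$, together with $\im(\delta_k\colon H_2(X_k)\to H_1(A_k\cap B_k))=0$, makes $\Psi_k$ surjective. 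Hence $H_2(X_k)\cong\coker\Phi_k$ with $\Phi_k$ injective, exactly as in the tree case but now without any freeness input.

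Since $\Phi_k$ is injective, passing to ranks yields the two identities $\beta_2(B_k')=\beta_2(A_k)+\beta_2(B_k)-\beta_2(A_k\cap B_k)$ and $\beta_2(B_{k-1})=\beta_2(A_k')+\beta_2(B_k')-\beta_2(A_k'\cap B_k')$, using $X_k=B_k'$ and $X_k'=B_{k-1}$. Each summand is the second Betti number of a Cartesian product, so I would expand it by the K\"unneth theorem. The crucial simplification in the single-edge setting is that detaching the bridge removes only a contractible pendant, so $D_m(\tilde\Gamma_i)$ and $D_m(\Gamma_i)$ have the same homology; substituting $\tilde\beta_i^{(m)}=\beta_i^{(m)}$ into the K\"unneth expansions collapses the two identities into the clean recurrence
\[\beta_2(B_{k-1})-\beta_2(B_k)=\left(\beta_2^{(n-k)}(\Gamma_1)-\beta_2^{(n-k-1)}(\Gamma_1)\right)+\left(\beta_1^{(n-k)}(\Gamma_1)-\beta_1^{(n-k-1)}(\Gamma_1)\right)\beta_1^{(k)}(\Gamma_2).\]

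With the initial datum $\beta_2(B_{n-1})=\beta_2(D_n(\Gamma_2))=\beta_2^{(n)}(\Gamma_2)$, summing the recurrence from $k=n-1$ down to $k=1$ telescopes the $\beta_2(\Gamma_1)$ contributions to $\beta_2^{(n-1)}(\Gamma_1)$. Feeding the result into the top sequence for $X_0=D_n(\Gamma_1)\cup B_0$, whose intersection $A_0\cap B_0=D_{n-1}(\tilde\Gamma_1)\times v$ contributes $-\beta_2^{(n-1)}(\Gamma_1)$, cancels that term and leaves $\beta_2^{(n)}(\Gamma_1)+\beta_2^{(n)}(\Gamma_2)$ together with the cross-sum. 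A final reindexing $j=n-k$ rewrites the cross-sum as $\sum_{k=1}^{n}\left(\beta_1^{(k)}(\Gamma_1)-\beta_1^{(k-1)}(\Gamma_1)\right)\beta_1^{(n-k)}(\Gamma_2)$, where the extra $k=n$ term is harmless since $\beta_1^{(0)}(\Gamma_2)=0$; this is precisely (\ref{h2_2comp}).

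The genuine obstacle is the first step's dependence on Conjecture \ref{delta_conj}. For arbitrary components $H_3(X_k)$ need not vanish and the homology need not be free, so neither the injectivity of $\Phi_k$ nor the isomorphism $H_2(X_k)\cong\coker\Phi_k$ can be obtained from the over-complete toric basis as in the tree case; one must argue that every boundary $\delta$ annihilates the toric generators and that no exotic $2$-cycle supported across the bridge survives. All the remaining work, i.e. the K\"unneth bookkeeping and the telescoping, is mechanical once $\im\delta=0$ is granted. A minor secondary point is to justify carefully the homological equivalence $D_m(\tilde\Gamma_i)\simeq D_m(\Gamma_i)$ used to set $\tilde\beta_i^{(m)}=\beta_i^{(m)}$.
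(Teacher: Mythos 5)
Your proposal is correct and follows essentially the same route as the paper: the same recursive decomposition (\ref{eq:single_edge_decomp}), Conjecture \ref{delta_conj} giving injectivity of $\Phi_k$ and $H_2(X_k)\cong\coker\Phi_k$, the K\"unneth expansion with $D_m(\tilde\Gamma_i)\simeq D_m(\Gamma_i)$, and the resulting telescoping recurrence with initial condition $\beta_2(B_{n-1})=\beta_2^{(n)}(\Gamma_2)$. The only (harmless) difference is that you pass directly to ranks via additivity in short exact sequences, whereas the paper realises $\coker\Phi_k$ structurally as a quotient by $\coim\Phi_k$ acting on $H_2(A_k)$ — a stronger statement it later exploits in the remark on torsion, but not needed for the Betti number formula itself.
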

\begin{proof}
We prove the theorem in the standard way. The assumption that $\im\delta=0$ and $\im\delta'=0$ implies that maps $\Phi_k$ and $\Phi'_k$ are injective for all $k$. Therefore, the quotients in $\coker\Phi_k$ and $\coker\Phi_k'$ can be again realised as quotients by $\coim\Phi_k$ and $\coim\Phi_k'$ on $H_2(A_k)$ and $H_2(A_k')$ respectively.  Hence, for the Mayer-Vietoris sequence for $X_k'$ we have
\begin{equation}\label{recurrence_h2_2comp}
H_2(B_k')\cong H_2(B_k)\oplus\frac{H_2(D_{n-k}(\Gamma_1))\oplus H_{2}(D_k(\tilde \Gamma_2))\oplus \left(H_1(D_{n-k}(\Gamma_1))\otimes H_{1}(D_k(\tilde \Gamma_2))\right)}{H_2(D_{n-k-1}(\tilde\Gamma_1))\oplus H_{2}(D_k(\tilde \Gamma_2))\oplus\left(H_1(D_{n-k-1}(\tilde \Gamma_1))\otimes H_{1}(D_k(\tilde \Gamma_2))\right)}.
\end{equation}
The sequence for $X_k$ yields $H_2(B_{k-1})\cong H_2(B_k')$. From these equations we obtain the following recurrence relation for the rank of $H_2(B_{k-1})$.
\[\beta_2(B_{k-1})=\beta_2(B_{k})+\beta_2^{(n-k)}(\Gamma_1)-\beta_2^{(n-k-1)}(\Gamma_1)+\left(\beta_1^{(n-k)}(\Gamma_1)-\beta_1^{(n-k-1)}(\Gamma_1)\right)\beta_1^{(k)}(\Gamma_2)\]
for $k=0,1,\dots,n-1$. The initial condition is $\beta_2(B_{n-1})=\beta_2(D_n(\Gamma_2))$. Solving the recurrence and using the fact that $\sum_{k=0}^{n-1}\beta_2^{(n-k)}(\Gamma_1)-\beta_2^{(n-k-1)}(\Gamma_1)=\beta_2(D_n(\Gamma_1))$ we obtain equation (\ref{h2_2comp}).
\end{proof}

\begin{rem}
The reasoning from the proof of the above theorem can be used to describe the torsion of $H_2(D_n(\Gamma))$. Namely, using the fact that $D_k(\Gamma_2)$ is homotopy equivalent to $D_k(\tilde \Gamma_2)$,  recurrence relation (\ref{recurrence_h2_2comp}) can be simplified to the following form.
\[H_2(B_{k-1})\cong H_2(B_k)\oplus\frac{H_2(D_{n-k}(\Gamma_1))}{H_2(D_{n-k-1}(\Gamma_1))}\oplus\left(\frac{H_1(D_{n-k}(\Gamma_1))}{H_1(D_{n-k-1}(\Gamma_1))}\otimes H_{1}(D_k(\Gamma_2))\right),\]
with the initial condition $H_2(B_{n-1})=H_2(D_n(\Gamma_2))$. Hence, the torsion of $H_2(D_n(\Gamma))$ comes from the torsions of $H_2(D_{n-k}(\Gamma_1))/H_2(D_{n-k-1}(\Gamma_1))$ for $k\in\{0,1,\dots,n-2\}$ and the torsion parts of 
\[\left(\frac{H_1(D_{n-k}(\Gamma_1))}{H_1(D_{n-k-1}(\Gamma_1))}\otimes H_{1}(D_k(\Gamma_2))\right)\]
for $k\in\{0,1,\dots,n-1\}$. In particular, the latter yields only copies of $\ZZ_2$, as the only possible torsion of the first homology group is $\ZZ_2$. This is the case if and only if at least one of the components is non-planar \cite{HKRS,KoPark}.
\end{rem}

\section{When is $\im\delta$ nontrivial?}\label{sec:partial}
The main obstacle in the continuation of this paper's approach in a rigorous way is the knowledge of cycles that do not belong to $\ker\delta$. In this section we provide two examples of such cycles for $D_n(\Gamma)$ and $\D_n(\Gamma)$. We conjecture that the types of cycles described in this section are all possible cycles that do not belong to $\ker\delta$. The first class of cycles appears while considering a simultaneous exchange of particles in $D_k(\tilde\Gamma_1)$, $D_l(\tilde\Gamma_2)$ and on a $Y$-subgraph centered at vertex $v$, see Fig.\ref{y_at_v}.
\begin{figure}[ht]
 
\includegraphics[width=0.4\textwidth]{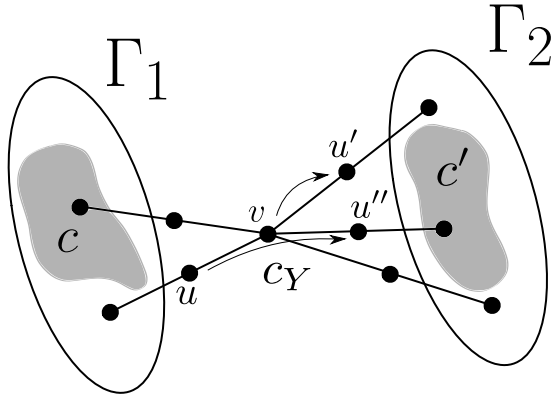}
\caption{A cycle, for which $\delta [c\otimes c'\otimes c_Y]\neq0$ for the boundary map in the proper Mayer-Vietoris seqience. Chain $c_Y$ corresponds to the exchange of two particles on the $Y$-subgraph centered at $v$ and spanned on vertices $u,u',u''$. Chains $c,c'$ are cycles of arbitrary dimensions that are contained in $D_k(\tilde\Gamma_1)$ and $D_l(\tilde\Gamma_2)$ respectively.}
\label{y_at_v}
\end{figure} 
Such a cycle is isomorphic to the tensor product of cycles $z=c\otimes c'\otimes c_Y$, where $c\in\mk{C}_p(D_k(\tilde\Gamma_1))$, $c'\in\mk{C}_s(D_l(\tilde\Gamma_2))$ and 
\[c_Y=\{e_v^{u'},u\}+\{e_u^{v},u'\}+\{e_v^{u''},u'\}-\{e_v^{u'},u''\}-\{e_u^{v},u''\}-\{e_v^{u''},u\}.\]
In our notation, the cycle corresponds to the following boundary map
\[\delta'_{l+1}:\ H_{p+s+1}(X_{l+1})\to H_{p+s}(D_{k+1}(\tilde\Gamma_1)\times D_{l+1}(\tilde\Gamma_2)).\]
The decomposition $z=x+y$, $x\in\mk{C}_{p+s+1}(A_k')$, $y\in\mk{C}_{p+s+1}(B_k')$ yields
\begin{eqnarray*}
x=c\otimes c'\otimes\left(\{e_v^{u'},u\}-\{e_v^{u''},u\}\right)\in\mk{C}_{p+s+1}(D_{k+1}(\tilde\Gamma_1)\times D_{l+1}(\Gamma_2)), \\
y=c\otimes c'\otimes\left(\{e_u^{v},u'\}+\{e_v^{u''},u'\}-\{e_v^{u'},u''\}-\{e_u^{v},u''\}\right)\in\mk{C}_{p+s+1}(D_{k+1}(\Gamma_1)\times D_{l+1}(\tilde\Gamma_2)).
\end{eqnarray*}
For the boundary of a $1$-cell $\partial \{e_a^b,c\}=\{b,c\}-\{a,c\}$, we have $\partial' x=-\partial' y=c\otimes c'\otimes \left(\{u',u\}-\{u'',u\}\right)\neq 0.$

Another class of cycles that do not belong to the kernel of a proper boundary map, describes exchanges of distinguishable particles on pairs of star subgraphs. For simplicity, let us focus on an example of two particles on the double $Y$-graph (Fig.\ref{double_y}).
\begin{figure}[ht]
 
\includegraphics[width=0.3\textwidth]{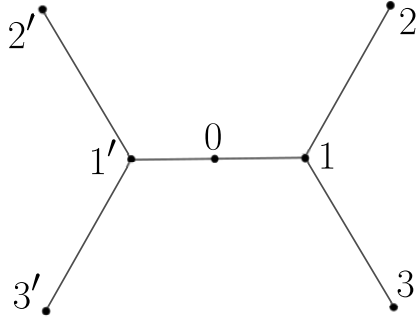}
\caption{Double-$Y$ graph for two particles - wedge sum of two $Y$-graphs, each sufficiently subdivided for two particles.}
\label{double_y}
\end{figure} 
There are two cycles in $\D_2(\Gamma)$ that are analogous to the exchange of indistinguishable particles on a single $Y$-subgraph (see Fig.\ref{double_y_conf}). However, unlike the configuration space for indistinguishable particles, $\D_2(\Gamma)$ contains an additional $1$-cycle that involves both $Y$-subgraphs. This cycle is marked with arrows on Fig.\ref{double_y_conf}. 
\begin{figure}[ht]
 
\includegraphics[width=0.8\textwidth]{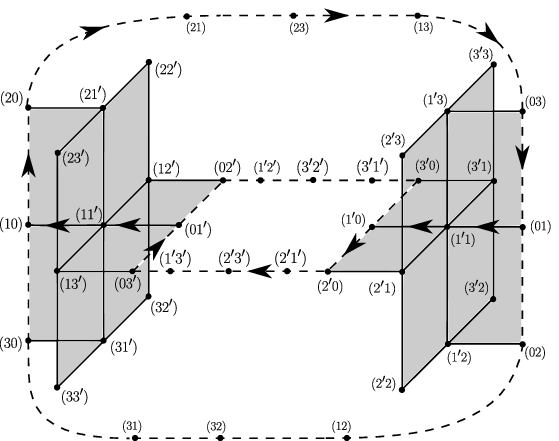}
\caption{Distinguished configuration space $\D_2(\Gamma)$ for $\Gamma$ a double-$Y$ graph form Fig.\ref{double_y}. Cycles marked with dashed line describe exchange of particles on a single $Y$-subgraph, while the cycle marked with arrows describes exchange of particles that involves both $Y$-subgraphs.}
\label{double_y_conf}
\end{figure} 
For the order of vertices $3'<2'<1'<0<1<2<3$ and the boundary map $\partial (e_a^b,c)=(b,c)-(a,c)$, $\partial (c,e_a^b)=(c,b)-(c,a)$, the chain that describes such a cycle is
\begin{eqnarray*}
c_{Y_1,Y_2}=(e_1^2,0)+(2,e_0^1)+(2,e_1^3)-(e_1^2,3)-(e_0^1,3)-(0,e_1^3)-(e_{1'}^0,1)+ \\ -(1',e_0^1)-(e_{2'}^{1'},0)-(2',e_{1'}^0)-(2',e_{3'}^{1'})+(e_{2'}^{1'},3')+(e_{1'}^0,3')+(0,e_{3'}^{1'})+(e_0^1,1')-(1,e_{1'}^0).
\end{eqnarray*}
This is a cycle, which goes through the whole configuration space, hence the proper boundary map for the Mayer-Vietoris sequence is
\[\delta_0:\ H_1(\D_2(\Gamma))\to H_0(\tilde Y_1\times 0)\oplus H_0(0\times \tilde Y_1).\]
The decomposition of $c_{Y_1,Y_2}=x+y$ yields
\[x=-(e_{2'}^{1'},0)-(2',e_{1'}^0)-(2',e_{3'}^{1'})+(e_{2'}^{1'},3')+(e_{1'}^0,3')+(0,e_{3'}^{1'})\in\mk{C}_1(\D_2(Y_1)).\]
Then $\partial_0x=(0,1')-(1',0)\neq0$. We encounter a similar situation while considering a star graph connected with another graph $\Gamma$, where the particles exchange on the star graph and some star subgraph of $\Gamma$. We suppose that for distinguished configuration spaces of tree graphs the cycles described in this section are the only possible cycles that do not belong to $\ker\delta$. However, the proof requires a further study of the discrete Morse theory for distinguishable particles.

\section{Summary}
Without any additional knowledge about the over-complete basis of the homology groups it is hard to continue the above approach in a rigorous way. The main difficulty is a full description of $\im\delta$. However, we suspect that in the case indistinguishable particles the only nontrivial elements from $\im\delta$ are the ones that involve exchanges of particles on $Y$-subgraphs centered at vertex $v$, as described in section \ref{sec:partial}. If one accepts such an assumption, the methods from this paper can be used {\it mutatis mutandis} to express $H_m(D_n(\Gamma))$ of an arbitrary one-connected graph by the homology groups of its higher-connected components. The case of the distinguished configuration spaces is much more difficult, as we point out in section \ref{sec:partial}. 

The methodology developed in this paper can be viewed as one of the possible general frameworks for describing the topology of the configuration spaces of graphs. A generalisation of this approach would be to consider the decomposition of an arbitrary graph to star subgraphs and describe the relations and connections between the corresponding components of the configurations space. This is a subject of our further studies.

\begin{acknowledgments}
We would like to thank Jonathan Robbins for fruitful discussions on the significance of homology groups in quantum theories and Jon Keating for suggesting improvements in the manuscript. TM is supported by Polish Ministry of Science and Higher Education ``Diamentowy Grant'' no. DI2013 016543, European Research Council grant QOLAPS and by the CTP PAS research grant for young researchers. AS would like to thank the Marie Curie International Outgoing Fellowship for financial support.
\end{acknowledgments}

\end{document}